\newcommand{\bigo}{\ensuremath{\mathcal{O}}}
\newcommand{\dpws} {\text{d-pw}}
\newcommand{\dtws} {\text{d-tw}}
\newcommand{\dtw} {\text{d-tw}}
\newcommand{\pws} {\text{pw}}
\newcommand{\tws} {\text{tw}}
\newcommand{\un} {{\it und}}
\newtheorem{theorem}{Theorem}[section]
\newtheorem{example}[theorem]{Example}
\newtheorem{lemma}[theorem]{Lemma}
\newtheorem{definition}[theorem]{Definition}
\newtheorem{corollary}[theorem]{Corollary}
\newtheorem{remark}[theorem]{Remark}
\newenvironment{proof}{\noindent{\bf Proof~}}{\null\hfill $\Box$\par\medskip}
\definecolor{light-gray}{gray}{0.5}
\begin{document}

\title{Computing directed path-width and directed tree-width of recursively defined digraphs\thanks{A short version of 
this paper appeared in Proceedings of the {\em International Computing and Combinatorics Conference} (COCOON 2018) \cite{GR18c}.}}

\author[1]{Frank Gurski}
\author[1]{Carolin Rehs}

\affil[1]{\small University of  D\"usseldorf,
Institute of Computer Science, Algorithmics for Hard Problems Group,\newline 
40225 D\"usseldorf, Germany}

\maketitle

\begin{abstract}
In this paper we consider the directed path-width and directed tree-width of recursively defined digraphs.
As an important  combinatorial tool, we show how the directed path-width and the directed tree-width
can be computed for the disjoint union, order composition, directed union, and series composition
of two directed graphs. These results imply the equality of 
directed path-width and directed tree-width for all digraphs which can be 
defined by these four operations. This allows us to show a linear-time solution for computing  
the directed path-width and directed tree-width of all these digraphs.
Since directed co-graphs are precisely those digraphs which can be defined by 
the disjoint union, order composition,  and series composition
our results imply the equality of 
directed path-width and directed tree-width for directed co-graphs
and also a linear-time solution for computing  
the directed path-width and directed tree-width of directed co-graphs,
which generalizes the known results for undirected co-graphs
of Bodlaender and M\"ohring.

\bigskip
\noindent
{\bf Keywords:} 
directed path-width; directed tree-width; directed co-graphs
\end{abstract}

\section{Introduction}

Tree-width is a well-known graph parameter \cite{RS86}. 
Many NP-hard graph problems admit poly\-nomial-time solutions when restricted to 
graphs of bounded tree-width using the tree-de\-com\-po\-sition \cite{Arn85,AP89,Hag00,KZN00}. 
The same holds for path-width \cite{RS83} since a path-decomposition 
can be regarded as a special case of a tree-decomposition. 
Computing both parameters is hard even for bipartite graphs and complements
of bipartite graphs \cite{ACP87}, while for
co-graphs it has been shown \cite{BM90,BM93} that the path-width equals the 
tree-width and how to compute this value in linear time.

During the last years, width parameters for directed graphs 
have received a lot of attention \cite{GHKMORS16}.
Among these are directed path-width and directed tree-width
\cite{JRST01}. 
Since for complete bioriented digraphs the directed path-width equals the
(undirected) path-width of the corresponding underlying undirected graph
it follows that
determining whether the directed path-width of some given digraph  is 
at most some given value $w$ is NP-complete. The same holds for directed tree-width. 
There is an XP-algorithm for directed path-width w.r.t.~the standard parameter  
by \cite{KKKTT16}, which and implies that for each constant $w$, it is 
decidable in polynomial time whether a given 
digraph has directed path-width at most $w$. The same holds for
directed tree-width by \cite{JRST01}. 
This motivates to consider the recognition
problem restricted to special digraph classes.

We show useful properties of directed path-de\-com\-positions and
directed tree-de\-com\-positions, such as bidirectional complete subdigraph
and bidirectional complete bipartite subdigraph lemmas.
These results allow us to show how the directed path-width and directed tree-width
can be computed for the disjoint union, order composition, directed union, and
series composition of two directed graphs. Our proofs are constructive, i.e.\
a directed path-decomposition and a directed tree-decomposition 
can be computed from a given expression. These results imply the equality of 
directed path-width and directed tree-width for all digraphs which can be 
defined by the disjoint union, order composition, directed union, and
series composition. This allows us to show a linear-time solution for computing  
the directed path-width and directed tree-width of all these digraphs.
Among these are directed 
co-graphs, which can be defined by disjoint union, order composition, and
series composition \cite{CP06}. Directed co-graphs are useful to characterize
digraphs of directed NLC-width 1 and digraphs of directed clique-width 2 \cite{GWY16} 
and are useful for the reconstruction of the evolutionary 
history of genes or species using genomic sequence data \cite{HSW17,NEMWH18}.
Our results imply the equality of 
directed path-width and directed tree-width for directed co-graphs
and a linear-time solution for computing  
the directed path-width and directed tree-width of directed co-graphs.
Since for complete bioriented digraphs the directed path-width equals the
(undirected) path-width of the corresponding underlying undirected graph
and the directed tree-width equals the
(undirected) tree-width of the corresponding underlying undirected graph
our results generalize the known results from \cite{BM90,BM93}.

\section{Preliminaries}\label{sec-prel}

We use the notations of Bang-Jensen and Gutin \cite{BG09} for graphs and digraphs.

\subsection{Graphs}

A {\em graph} is a pair  $G=(V,E)$, where $V$ is 
a finite set of {\em vertices} 
and $E \subseteq \{ \{u,v\} \mid u,v \in
V,~u \not= v\}$ is a finite set of {\em edges}.
A graph $G'=(V',E')$ is a {\em subgraph} of graph $G=(V,E)$ if $V'\subseteq V$ 
and $E'\subseteq E$.  If every edge of $E$ with both end vertices in $V'$  is in
$E'$, we say that $G'$ is an {\em induced subgraph} of digraph $G$ and 
we write $G'=G[V']$.
For some undirected graph $G=(V,E)$ its complement graph is defined by
$$\overline{G}=(V,\{\{u,v\}~|~\{u,v\}\not\in E, u,v\in V, u\neq v\}).$$

\subsection{Recursively defined Graphs}

\subsubsection{Operations}

Let $G_1=(V_1,E_1), \ldots, G_k=(V_k,E_k)$ be $k$ vertex-disjoint graphs. 
\begin{itemize}
\item
The {\em disjoint union} of $G_1, \ldots, G_k$, 
denoted by $G_1 \cup \ldots \cup G_k$, 
is the graph with vertex set $V_1\cup \ldots \cup V_k$ and 
edge set $E_1\cup \ldots \cup E_k$. 

\item
The {\em join composition} of $G_1, \ldots, G_k$, 
denoted by $G_1\times \ldots \times G_k$, 
is defined by their disjoint union plus all possible edges between
vertices of $G_i$ and $G_j$  for all $1\leq i,j\leq k$, $i\neq j$.
\end{itemize}

\subsubsection{Co-graphs}

Co-graphs have been introduced in the 1970s by a number of authors
under different notations, such as hereditary Dacey graphs (HD graphs) in \cite{Sum74},
$D^*$-graphs in \cite{Jun78}, 2-parity graphs in \cite{BU84}, and complement reducible
graphs (co-graphs) in \cite{Ler71}. Co-graphs can be characterized 
as the set of graphs without an induced path with four vertices \cite{CLS81}.
From an algorithmic point of view the following recursive 
definition  is very useful.

\begin{definition}[Co-graphs]
The class of {\em co-graphs} is recursively defined as follows.
\begin{enumerate}[(i)]
\item Every graph on a single vertex $(\{v\},\emptyset)$, 
denoted by $\bullet$, is a  {\em co-graph}.

\item If  $G_1,\ldots,G_k$  are  vertex-disjoint co-graphs, then 
\begin{enumerate}
\item 
the disjoint union  
$G_1\cup\ldots \cup G_k$ and 

\item
the join composition 
$G_1\times \ldots \times G_k$  are  {\em co-graphs}.
\end{enumerate}
\end{enumerate}
\end{definition}

By this definition every co-graph can be represented by a tree structure, 
denoted as {\em co-tree}. The leaves of the co-tree represent the 
vertices of the graph and the inner nodes of the co-tree  correspond 
to the operations applied on the
subexpressions defined by the subtrees.  
For every graph $G$ one can decide in linear time, whether $G$ is
a co-graph and in the case of a positive answer construct a co-tree
for $G$, see \cite{HP05}. 
Using the co-tree a lot of hard problems have been shown to be 
solvable in polynomial time when restricted to co-graphs.
Such problems are clique, independent set, partition into
independent sets (chromatic number), partition into cliques, hamiltonian cycle, 
isomorphism \cite{CLS81}.

\subsection{Digraphs}

A {\em directed graph} or {\em digraph} is a pair  $G=(V,E)$, where $V$ is 
a finite set of {\em vertices} and  $E\subseteq \{(u,v) \mid u,v \in
V,~u \not= v\}$ is a finite set of ordered pairs of distinct
vertices called {\em arcs}. 
A digraph $G'=(V',E')$ is a {\em subdigraph} of digraph $G=(V,E)$ if $V'\subseteq V$ 
and $E'\subseteq E$.  If every arc of $E$ with both end vertices in $V'$  is in
$E'$, we say that $G'$ is an {\em induced subdigraph} of digraph $G$ and we 
write $G'=G[V']$.
For some digraph $G=(V,E)$ its {\em complement digraph} is defined by
$$\overline{G}=(V,\{(u,v)~|~(u,v)\not\in E, u,v\in V, u\neq v\})$$
and its {\em converse digraph} is defined by
$$G^c=(V,\{(u,v)~|~(v,u)\in E, u,v\in V, u\neq v\}).$$

Let $G=(V,E)$ be a digraph.
\begin{itemize}
\item $G$ is {\em edgeless} if for all $u,v \in V$, $u \neq v$, 
none of the two pairs $(u,v)$ and $(v,u)$ belongs to $E$.

\item $G$ is a {\em tournament} if for all $u,v \in V$, $u \neq v$, 
exactly one of the two pairs $(u,v)$ and $(v,u)$ belongs to $E$.

\item $G$ is {\em semicomplete} if for all $u,v \in V$, $u \neq v$, 
at least one of the two pairs $(u,v)$ and $(v,u)$ belongs to $E$.

\item $G$ is {\em (bidirectional) complete} if for all $u,v \in V$, $u \neq v$, 
both of the two pairs $(u,v)$ and $(v,u)$ belong to $E$.
\end{itemize}

\paragraph{Omitting the directions} 
For some given digraph $G=(V,E)$, we define
its {\em underlying undirected graph} by ignoring the directions of the 
edges, i.e.~$\un(G)=(V,\{ \{u,v\} \mid (u,v) \in E \text{ or } (v,u) \in E\})$.

\paragraph{Orientations} 
There are several ways to define a digraph $G=(V,E)$ from an undirected 
graph $G_u=(V,E_u)$.
If we replace every edge $\{u,v\}\in E_u$ by 
\begin{itemize}
\item
one of the arcs $(u,v)$ and $(v,u)$, we denote $G$ as an {\em orientation} of $G_u$.
Every digraph $G$  which can be obtained by an orientation of some undirected
graph $G_u$ is called an {\em oriented graph}.

\item
one or both of the arcs $(u,v)$ and $(v,u)$, we denote $G$ as a {\em biorientation} of $G_u$.
Every digraph $G$  which can be obtained by a biorientation of some undirected
graph $G_u$ is called a {\em bioriented graph}.

\item
both arcs $(u,v)$ and $(v,u)$, we denote $G$ as a {\em complete biorientation} of $G_u$.
Since in this case $G$ is well defined by $G_u$ we also denote
it by $\overleftrightarrow{G_u}$.
Every digraph $G$  which can be obtained by a complete biorientation of some undirected
graph $G_u$ is called a {\em complete bioriented graph}. 
\end{itemize}

\subsection{Recursively defined Digraphs}

\subsubsection{Operations}

The following operations
have already been considered  by Bechet et al.\ in \cite{BGR97,JRST01}.
Let $G_1=(V_1,E_1), \ldots, G_k=(V_k,E_k)$ be $k$ vertex-disjoint digraphs. 
\begin{itemize}
\item
The {\em disjoint union} of $G_1, \ldots, G_k$, 
denoted by $G_1 \oplus \ldots \oplus G_k$, 
is the digraph with vertex set $V_1\cup \ldots \cup V_k$ and 
arc set $E_1\cup \ldots \cup E_k$. 

\item
The {\em series composition} of $G_1,\ldots, G_k$, 
denoted by $G_1\otimes \ldots \otimes G_k$, 
is defined by their disjoint union plus all possible arcs between
vertices of $G_i$ and $G_j$ for all $1\leq i,j\leq k$, $i\neq j$.

\item
The {\em order composition} of $G_1, \ldots, G_k$, 
denoted by $G_1\oslash \ldots \oslash G_k$, 
is defined by their disjoint union plus all possible arcs from 
vertices of $G_i$ to vertices of $G_j$ for all $1\leq i < j\leq k$.

\item 
The {\em directed union} of  $G_1, \ldots, G_k$, 
denoted by $G_1\ominus \ldots \ominus G_k$, 
is defined by their disjoint union  plus  possible arcs from 
vertices of $G_i$ to vertices of $G_j$ for all $1\leq i < j\leq k$.\footnote{That is,
$G_1, \ldots, G_k$ are  induced subdigraphs of $G_1\ominus \ldots \ominus G_k$ and 
there is no edge $(u,v)$ in
$G_1\ominus \ldots \ominus G_k$
such that $v\in V_i$ and $u\in V_j$ for $j>i$.
The 
directed union generalizes the disjoint union and the order composition.}
\end{itemize}

\subsubsection{Directed co-graphs}

We recall the definition of directed co-graphs from \cite{CP06}. 

\begin{definition}[Directed co-graphs, \cite{CP06}]
The class of {\em directed co-graphs} is recursively defined as follows.
\begin{enumerate}[(i)]
\item Every digraph on a single vertex $(\{v\},\emptyset)$, 
denoted by $\bullet$, is a {\em directed co-graph}.

\item If  $G_1,\ldots,G_k$  are vertex-disjoint directed co-graphs, then 
\begin{enumerate}
\item 
the disjoint union  
$G_1\oplus\ldots \oplus G_k$,

\item 
the series composition
$G_1 \otimes \ldots \otimes G_k$, and 
\item
the order composition 
$G_1\oslash \ldots \oslash G_k$  are {\em directed co-graphs}.
\end{enumerate}
\end{enumerate}
\end{definition}

By the definition we conclude that for every directed co-graph $G=(V,E)$ the
underlying undirected graph $\un(G)$ is a co-graph, but not vice versa.

Similar as undirected co-graphs by the $P_4$, also directed co-graphs
can be characterized by excluding eight forbidden induced subdigraphs \cite{CP06}.

Obviously for every directed co-graph we can define a tree structure, 
denoted as {\em di-co-tree}. The leaves of the di-co-tree represent the 
vertices of the graph and the inner nodes of the di-co-tree  correspond 
to the operations applied on the subexpressions defined by the subtrees. 
For every directed co-graph one can construct a di-co-tree in linear time, 
see \cite{CP06}. 
The following lemma shows that it suffices to consider binary di-co-trees.

\begin{lemma}\label{le-di-co}
Every di-co-tree $T$ can be transformed into an equivalent binary di-co-tree $T'$, such
that every inner vertex in $T'$ has exactly two sons.
\end{lemma}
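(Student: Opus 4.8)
The plan is to exploit the associativity of all three defining operations. First I would verify that the disjoint union $\oplus$, the series composition $\otimes$, and the order composition $\oslash$ are each associative. For $\oplus$ and $\otimes$ this is immediate, since the arc set that the operation adds depends only on the partition of the vertices into the constituent parts and not on any grouping of the parts; these two operations are in fact also commutative. For $\oslash$ one checks that both $(G_1 \oslash G_2) \oslash G_3$ and $G_1 \oslash (G_2 \oslash G_3)$ consist of the disjoint union together with exactly the arcs $(u,v)$ where $u$ lies in an earlier part than $v$ in the fixed left-to-right order, so that both coincide with the $3$-ary composition $G_1 \oslash G_2 \oslash G_3$. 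Thus $\oslash$ is associative, but in contrast to $\oplus$ and $\otimes$ it is not commutative, so the left-to-right order of its arguments must be preserved.

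Next I would transform the tree node by node. By the recursive definition of directed co-graphs every inner node of a di-co-tree combines at least two subexpressions, so each inner node has at least two sons. Consider any inner node $v$ of $T$ labeled with an operation $\circ \in \{\oplus, \otimes, \oslash\}$ whose sons are the roots of the subtrees $T_1, \ldots, T_k$, listed in the left-to-right order given by $T$. If $k = 2$ there is nothing to do. If $k > 2$, I replace $v$ by a right-linear cascade of $k-1$ fresh binary nodes, each labeled $\circ$, realising the expression $T_1 \circ (T_2 \circ (\cdots \circ (T_{k-1} \circ T_k)))$, where the order of the subtrees is kept intact.

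By the associativity established in the first step, this cascade represents exactly the same directed co-graph as the original $k$-ary node $v$; preserving the left-to-right order of the subtrees guarantees correctness even for the non-commutative operation $\oslash$. Performing this replacement at every inner node yields a di-co-tree $T'$ in which every inner node has exactly two sons and which defines the same digraph as $T$. The only point requiring genuine care—and hence the main obstacle—is the associativity verification for $\oslash$ together with the bookkeeping of the argument order, since an incorrect grouping or an accidental reordering would change which arcs are inserted between the parts. Once this is settled the remainder of the argument is purely structural.
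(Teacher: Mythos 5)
Your proof is correct and follows essentially the same route as the paper: both rely on the associativity of $\oplus$, $\otimes$, and $\oslash$ to replace each $k$-ary node by a cascade of binary nodes realising the same expression. Your version is in fact more careful than the paper's one-line argument, since you explicitly check associativity of the non-commutative operation $\oslash$ and note that the left-to-right order of the subtrees must be preserved.
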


\begin{proof}
Let $G$ be a directed co-graph and $T$ be a di-co-tree for $G$. Since 
the disjoint union $\oplus$, the series composition $\otimes$, 
and the order composition $\oslash$ is associative, i.e. 
$G_1 \oplus\ldots \oplus G_k = (G_1 \oplus \ldots \oplus G_{k-1})\oplus G_k$,
we can transform $T$ recursively into a binary  di-co-tree $T'$ for $G$.
\end{proof}

Using the di-co-tree a lot of hard problems have been shown to be 
solvable in polynomial time when restricted to directed co-graphs \cite{Gur17a}.
In \cite{GWY16} the relation of directed co-graphs to the set of
graphs of directed NLC-width 1 and to the set of graphs of directed clique-width 2 
is analyzed. By \cite{HSW17,NEMWH18} directed co-graphs  are very useful for the reconstruction of the evolutionary 
history of genes or species using genomic sequence data.

\begin{lemma}\label{le-com}
Let $G$ be some digraph, then the following properties hold. 
\begin{enumerate}[1.]
\item
Digraph $G$ is a directed co-graph if and only if digraph $\overline{G}$ is  a directed co-graph.

\item 
Digraph $G$ is a directed co-graph if and only if digraph $G^c$ is  a directed co-graph.
\end{enumerate}
\end{lemma}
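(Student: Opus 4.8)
The plan is to prove both equivalences by structural induction on the recursive definition of directed co-graphs, exploiting how the complement $\overline{\cdot}$ and the converse $(\cdot)^c$ act on the three building operations $\oplus$, $\otimes$, and $\oslash$. Since both maps are involutions, i.e.\ $\overline{\overline{G}}=G$ and $(G^c)^c=G$, it suffices to prove one direction of each biconditional: if ``$G$ a directed co-graph $\Rightarrow$ $\overline{G}$ a directed co-graph'' holds for all $G$, then applying this implication to $\overline{G}$ yields the converse, and likewise for $(\cdot)^c$.

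First I would record the distributive identities that drive the induction. For the complement, disjoint union and series composition exchange roles, while order composition reverses the order of its operands:
\begin{align*}
\overline{G_1\oplus G_2} &= \overline{G_1}\otimes \overline{G_2}, &
\overline{G_1\otimes G_2} &= \overline{G_1}\oplus \overline{G_2}, &
\overline{G_1\oslash G_2} &= \overline{G_2}\oslash \overline{G_1}.
\end{align*}
For the converse, disjoint union and series composition are unchanged and only order composition reverses its operands:
\begin{align*}
(G_1\oplus G_2)^c &= G_1^c\oplus G_2^c, &
(G_1\otimes G_2)^c &= G_1^c\otimes G_2^c, &
(G_1\oslash G_2)^c &= G_2^c\oslash G_1^c.
\end{align*}
Each identity is verified by checking separately the arcs inside each part and the arcs running between the two parts. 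The only nontrivial case is $\oslash$: the complement turns the forward arcs $V_1\to V_2$ into backward arcs $V_2\to V_1$, so $\overline{G_1\oslash G_2}$ is again an order composition but with its parts swapped, while the converse reverses these arcs directly.

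Given the identities, the induction is immediate. The base case is a single vertex $\bullet$, whose complement and converse are again $\bullet$, hence directed co-graphs. For the step, if $G$ arises from directed co-graphs $G_1,\ldots,G_k$ via one of the three operations, then by the induction hypothesis each $\overline{G_i}$ (respectively $G_i^c$) is a directed co-graph, and the relevant identity rewrites $\overline{G}$ (respectively $G^c$) as one of the three operations applied to directed co-graphs, which is a directed co-graph by definition. Restricting to the binary di-co-trees of Lemma~\ref{le-di-co} lets one carry this out with $k=2$ only.

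I do not anticipate a genuine obstacle; the single delicate point is the order-reversal in the $\oslash$ identities, in particular confirming that the complement of an order composition is again an order composition (with swapped parts) and not, for instance, a directed union $\ominus$. Once these identities are in place, both biconditionals follow at once.
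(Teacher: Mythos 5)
Your proof is correct. In fact, the paper states Lemma~\ref{le-com} without any proof at all (it is treated as a known fact about directed co-graphs, cf.~\cite{CP06}), so there is no argument of the authors to compare against; your structural induction via the exchange identities $\overline{G_1\oplus G_2}=\overline{G_1}\otimes\overline{G_2}$, $\overline{G_1\otimes G_2}=\overline{G_1}\oplus\overline{G_2}$, $\overline{G_1\oslash G_2}=\overline{G_2}\oslash\overline{G_1}$ and $(G_1\oplus G_2)^c=G_1^c\oplus G_2^c$, $(G_1\otimes G_2)^c=G_1^c\otimes G_2^c$, $(G_1\oslash G_2)^c=G_2^c\oslash G_1^c$, together with the involution trick to get both directions, is exactly the routine argument the authors leave implicit. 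The one delicate point you flag --- that $\overline{G_1\oslash G_2}$ is again an order composition with swapped parts rather than some weaker directed union --- is indeed the crux: it is precisely what fails for the directed union $\ominus$, which is why the paper's Lemma~\ref{le-com2} for extended directed co-graphs retains only the converse statement and drops the complement statement (witnessed there by $\overrightarrow{P_3}$). Your reduction to binary operations via Lemma~\ref{le-di-co} is fine, though unnecessary, since the identities generalize verbatim to $k$-ary operations.
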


\subsubsection{Extended directed co-graphs}

Since the directed union generalizes the disjoint union and also the order composition
we can generalize the class of directed co-graphs as follows.

\begin{definition}[Extended directed co-graphs]
The class of {\em extended directed co-graphs} is recursively defined as follows.
\begin{enumerate}[(i)]
\item Every digraph on a single vertex $(\{v\},\emptyset)$, 
denoted by $\bullet$, is an {\em extended directed co-graph}.

\item If  $G_1,\ldots,G_k$  are vertex-disjoint extended directed co-graphs, then 
\begin{enumerate}
\item 
the directed union  
$G_1\ominus\ldots \ominus G_k$ and 

\item 
the series composition
$G_1 \otimes \ldots \otimes G_k$ are {\em extended directed co-graphs}.
\end{enumerate}
\end{enumerate}
\end{definition}

Also for every extended directed co-graph we can define a tree structure, 
denoted as {\em ex-di-co-tree}. The leaves of the ex-di-co-tree represent the 
vertices of the graph and the inner nodes of the ex-di-co-tree  correspond 
to the operations applied on the subexpressions defined by the subtrees.  
Following Lemma \ref{le-di-co} it suffices to consider binary ex-di-co-trees.

By applying the directed union which is not a  disjoint union and an order composition
we can obtain digraphs whose complement digraph is not an extended directed co-graph.
An example for this leads the directed path on $3$ vertices
$\overrightarrow{P_3}=(\{v_1,v_2,v_3\},\{ (v_1,v_2),(v_2,v_3)\})$.
Thus we only can carry over one of the two results shown in Lemma \ref{le-com}
to the class of extended directed co-graphs.

\begin{lemma}\label{le-com2}
Let $G$ be some digraph. 
Digraph $G$ is an extended directed co-graph if and only if digraph $G^c$ 
is  an  extended directed co-graph.
\end{lemma}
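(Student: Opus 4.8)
The plan is to prove the forward implication by structural induction on an ex-di-co-tree and then obtain the backward implication for free from the involution property $(G^c)^c = G$. The whole argument rests on two elementary identities describing how the converse operation distributes over the two operations that generate extended directed co-graphs.

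First I would record these identities. For the series composition every pair of vertices in distinct components is joined by arcs in both directions, so reversing all arcs leaves the inter-component arc set unchanged; hence
$$(G_1 \otimes \ldots \otimes G_k)^c = G_1^c \otimes \ldots \otimes G_k^c.$$
For the directed union all inter-component arcs point from $V_i$ to $V_j$ with $i<j$; reversing every arc turns these into arcs from $V_j$ to $V_i$, which is again a directed union, but now with the components listed in the opposite order. Thus
$$(G_1 \ominus \ldots \ominus G_k)^c = G_k^c \ominus \ldots \ominus G_1^c.$$
In both identities the converse is also applied to each component $G_i$, which is exactly what makes the induction go through.

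With these identities in hand the induction is immediate. In the base case $G=(\{v\},\emptyset)$ satisfies $G^c = G$, which is an extended directed co-graph. For the inductive step I would assume $G$ is obtained by a series composition or a directed union of extended directed co-graphs $G_1,\ldots,G_k$; by the induction hypothesis each $G_i^c$ is an extended directed co-graph, and the appropriate identity above exhibits $G^c$ as a series composition, respectively a directed union, of the $G_i^c$ (in reversed order in the latter case). Hence $G^c$ is an extended directed co-graph. This proves that $G$ being an extended directed co-graph implies the same for $G^c$; applying this implication to $G^c$ and using $(G^c)^c=G$ yields the converse, completing the equivalence.

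The only point requiring care --- and the reason the analogous statement for the complement $\overline{G}$ fails, as the example $\overrightarrow{P_3}$ already shows --- is the directed union identity: one must check that reversing the forward-pointing inter-component arcs does not leave the class, and indeed it does not, precisely because a directed union is closed under reversing the order of its arguments. Verifying that the reindexing $G_i \mapsto G_{k+1-i}$ sends the reversed arc pattern back to the ``$i<j$'' pattern is the main (though routine) obstacle; everything else is bookkeeping.
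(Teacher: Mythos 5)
Your proof is correct: the two distribution identities $(G_1 \otimes \ldots \otimes G_k)^c = G_1^c \otimes \ldots \otimes G_k^c$ and $(G_1 \ominus \ldots \ominus G_k)^c = G_k^c \ominus \ldots \ominus G_1^c$ hold, the structural induction on the ex-di-co-tree goes through, and the backward direction follows from $(G^c)^c = G$. The paper states this lemma without proof, and your argument is exactly the standard one the authors leave implicit, so there is nothing to contrast.
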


\section{Directed path-width}\label{pw-co}

According to Bar{\'a}t \cite{Bar06}, the notation of directed path-width was
introduced by Reed, Seymour, and Thomas around 1995 and relates to directed
tree-width introduced by Johnson, Robertson, Seymour, and Thomas in
\cite{JRST01}.

\begin{definition}[directed path-width]
A {\em directed path-decomposition} of a digraph $G=(V,E)$
is a sequence $(X_1, \ldots, X_r)$ of subsets of $V$, called {\em bags}, such 
that the following three conditions hold true.
\begin{enumerate}[(dpw-1)]
\item\label{dp1} $X_1 \cup \ldots \cup X_r ~=~ V$.
\item\label{dp2} For each $(u,v) \in E$ there is a pair $i \leq j$ such that
  $u \in X_i$ and $v \in X_j$.
\item\label{dp3}  If $u \in X_i$ and $u \in X_j$ for some  $u\in V$ and two 
indices $i,j$ with $i \leq j$, then $u \in X_\ell$ for all indices $\ell$ 
with $i \leq \ell \leq j$.
\end{enumerate}
The {\em width} of a directed path-decomposition ${\cal X}=(X_1, \ldots, X_r)$ 
is $$\max_{1 \leq i \leq r} |X_i|-1.$$ The {\em directed path-width} of $G$,
$\dpws(G)$ for short, is 
the smallest integer $w$ such that there is a directed path-de\-com\-po\-sition of
$G$ of width $w$. 
\end{definition}

\begin{lemma}[\cite{YC08}]\label{le-d-ud}
Let $G$ be some digraph, then $\dpws(G)\leq \pws(u(G))$.\footnote{The proofs
shown in \cite{YC08} use the notation of directed vertex separation number, which
is known to be equal to directed path-width.}
\end{lemma}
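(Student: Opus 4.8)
The plan is to prove the inequality by showing that every (undirected) path-decomposition of the underlying undirected graph $u(G)$ is, without any modification, already a valid directed path-decomposition of $G$ of the same width. Once this is established, I would take an optimal undirected path-decomposition of $u(G)$, which has width $\pws(u(G))$, and read it as a directed path-decomposition of $G$; since directed path-width is defined as the minimum width over all directed path-decompositions, this immediately yields $\dpws(G) \leq \pws(u(G))$. Note that the width of a sequence of bags is computed by the same formula $\max_i |X_i| - 1$ in both the directed and the undirected setting, so passing from one interpretation to the other does not change the width.

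Concretely, I would fix an arbitrary path-decomposition $(X_1,\ldots,X_r)$ of $u(G)$ and verify the three directed conditions for the very same sequence. Recall that an undirected path-decomposition requires that the bags cover $V$, that the two endpoints of every edge of $u(G)$ lie together in a common bag, and that each vertex occupies a contiguous block of bags. The coverage condition and the interval condition coincide verbatim with the directed conditions (dpw-1) and (dpw-3), so they transfer for free, and the entire argument reduces to checking (dpw-2).

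For (dpw-2) I would take an arbitrary arc $(u,v)\in E$. By the definition of the underlying undirected graph, $u(G)$ contains the edge $\{u,v\}$, so the undirected condition supplies an index $i$ with $u,v\in X_i$; setting $j:=i$ gives indices $i \leq j$ with $u\in X_i$ and $v\in X_j$, which is exactly what (dpw-2) demands. The observation driving the whole proof is that the directed requirement is a genuine \emph{relaxation} of the undirected one: where the undirected decomposition forces both endpoints of an edge into one common bag, (dpw-2) only asks that the tail appear no later than the head, and a common bag trivially satisfies this weaker demand.

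I do not expect a real obstacle here. The only point requiring care is to state the undirected path-width conditions precisely and to line them up against (dpw-1)--(dpw-3); once that matching is spelled out, the inclusion of decompositions is immediate and the argument is purely definitional (which is why a direct proof is available even though the source \cite{YC08} argues via the directed vertex separation number). It is worth remarking that the inequality can be strict, since the directed relaxation may admit strictly narrower decompositions, so no matching lower bound is claimed.
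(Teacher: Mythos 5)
Your proof is correct. Conditions (dpw-1) and (dpw-3) are literally identical to the undirected coverage and interval conditions, and your verification of (dpw-2) is sound: for any arc $(u,v)\in E$ the edge $\{u,v\}$ lies in $\un(G)$ by definition, so an undirected bag containing both endpoints witnesses (dpw-2) with $i=j$. Since the width formula $\max_i|X_i|-1$ is the same in both settings, reading an optimal path-decomposition of $\un(G)$ as a directed one gives $\dpws(G)\leq\pws(\un(G))$ immediately.

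Your route differs from the paper's, which offers no direct argument at all: the lemma is attributed to \cite{YC08}, and, as the footnote indicates, the proof there is phrased in terms of the directed vertex separation number, so the inequality is obtained by combining a result on vertex separation with the (nontrivial) equality of directed vertex separation number and directed path-width. Your argument is purely definitional and self-contained; it buys transparency and avoids importing the graph-searching machinery and the equivalence theorem. The cited route, on the other hand, situates the inequality inside the broader correspondence between digraph searching, vertex separation, and path-width, which is what \cite{YC08} is really about. Your closing remark that the inequality can be strict is also consistent with the paper: Lemma \ref{le-order} exhibits directed co-graphs where $\dpws(G\oslash H)<\pws(\un(G\oslash H))$, while Lemma \ref{le-c-bi} shows equality holds for complete bioriented digraphs.
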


\begin{lemma}[\cite{Bar06}]\label{le-c-bi}
Let $G$ be some  complete bioriented digraph, then $\dpws(G)= \pws(u(G))$.
\end{lemma}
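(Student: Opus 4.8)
The plan is to prove the two inequalities $\dpws(G) \le \pws(\un(G))$ and $\pws(\un(G)) \le \dpws(G)$ separately. The first direction is immediate from Lemma \ref{le-d-ud}, which holds for every digraph, so essentially all of the work lies in establishing the reverse inequality $\pws(\un(G)) \le \dpws(G)$ under the assumption that $G$ is complete bioriented.

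For the reverse direction, I would start from an optimal directed path-decomposition $\mathcal{X} = (X_1, \ldots, X_r)$ of $G$ of width $\dpws(G)$ and argue that the very same sequence of bags is already an (undirected) path-decomposition of $\un(G)$. Conditions (dpw-\ref{dp1}) and (dpw-\ref{dp3}) coincide verbatim with the covering condition and the interval condition required of an undirected path-decomposition, so the only thing left to verify is the undirected edge condition: for every edge $\{u,v\}$ of $\un(G)$ there must be a single bag $X_k$ containing both $u$ and $v$.

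The key step exploits the complete biorientation. An edge $\{u,v\}$ of $\un(G)$ corresponds to both arcs $(u,v)$ and $(v,u)$ in $E$. Applying (dpw-\ref{dp2}) to $(u,v)$ yields indices $i \le j$ with $u \in X_i$ and $v \in X_j$, and applying it to $(v,u)$ yields indices $i' \le j'$ with $v \in X_{i'}$ and $u \in X_{j'}$. Writing $I_u$ and $I_v$ for the sets of indices of bags containing $u$ and $v$, condition (dpw-\ref{dp3}) guarantees that each of these is a contiguous interval, say $I_u = [\ell_u, r_u]$ and $I_v = [\ell_v, r_v]$. The two arcs then force $\ell_u \le r_v$ and $\ell_v \le r_u$, which are exactly the conditions under which two intervals on the line overlap; hence $I_u \cap I_v \neq \emptyset$, and any index $k$ in this intersection gives a bag $X_k$ containing both $u$ and $v$.

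Since the bags are left unchanged, the width is preserved, so $\pws(\un(G)) \le \dpws(G)$, and together with Lemma \ref{le-d-ud} this yields the claimed equality. I expect the main obstacle to be precisely the interval-overlap argument: it is essential that (dpw-\ref{dp3}) turns the two separate one-directional witnesses for the arcs $(u,v)$ and $(v,u)$ into genuine intervals, and this is exactly the place where \emph{both} arcs of the biorientation are needed, since a single arc would only give a one-sided inequality and could not pull the two bag-intervals together.
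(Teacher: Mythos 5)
Your proof is correct, but it takes a genuinely different route from the paper's. The paper attributes the lemma to Bar{\'a}t \cite{Bar06} and only sketches an argument in terms of \emph{layouts}: directed path-width equals the directed vertex separation number (see the footnote to Lemma \ref{le-d-ud} citing \cite{YC08}), undirected path-width equals the undirected vertex separation number, and a layout of width $k$ for one of $\un(G)$ and its complete biorientation $G$ is a layout of width at most $k$ for the other, so the two widths coincide. You instead work directly with decompositions: starting from an optimal directed path-decomposition of $G$, you note that (dpw-1) and (dpw-3) are verbatim the undirected covering and interval conditions, and you recover the undirected edge condition by the interval-overlap argument --- the arcs $(u,v)$ and $(v,u)$ force $\ell_u\le r_v$ and $\ell_v\le r_u$ for the index intervals $I_u=[\ell_u,r_u]$ and $I_v=[\ell_v,r_v]$, hence $I_u\cap I_v\neq\emptyset$ --- which is precisely where the complete biorientation enters; that argument is sound. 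What each approach buys: the paper's sketch is shorter but rests on the nontrivial external equivalence between path-width and vertex separation number in both the directed and undirected settings, while yours is self-contained, uses nothing beyond the definitions and Lemma \ref{le-d-ud}, and makes explicit that a single arc would yield only a one-sided inequality and could not force the two bag-intervals to meet.
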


The proof can be done straightforward since a 
for $G$ of width $k$ leads to a layout for $\overleftrightarrow{G}$ 
of width at most $k$ and vice versa.

Determining whether the (undirected) path-width of some given (undirected) graph  is 
at most some given value $w$ is NP-complete 
\cite{KF79} 
even for bipartite graphs, complements
of bipartite graphs \cite{ACP87}, 
chordal graphs \cite{Gus93},
bipartite distance hereditary graphs \cite{KBMK93},  
and planar graphs with maximum vertex
degree 3 \cite{MS88}. 
Lemma \ref{le-c-bi} implies
that determining whether the directed path-width of some given digraph  is 
at most some given value $w$ is NP-complete even for digraphs whose underlying 
graphs lie in the mentioned classes.
On the other hand, determining whether the (undirected) path-width of some given (undirected) graph  is 
at most some given value $w$ is polynomial for permutation graphs \cite{BKK93},
circular arc graphs \cite{ST07a}, and co-graphs \cite{BM93}.

While undirected path-width can be solved by an FPT-algorithm \cite{Bod96}, 
the existence of such an algorithm for directed path-width is still open.
The directed path-width of a digraph $G=(V,E)$ can be computed in time 
$\bigo(\nicefrac{|E|\cdot |V|^{2\dpws(G)}}{(\dpws(G)-1)!})$ 
by \cite{KKKTT16}.
This leads to an XP-algorithm
for directed path-width w.r.t.~the standard parameter
and implies that for each constant $w$, it is decidable in polynomial time whether a given 
digraph has directed path-width at most $w$.

In order to prove our main results we show some properties
of directed path-decompositions. Similar results are known
for undirected path-decompositions and are useful within
several places.

\begin{lemma}[\cite{YC08}]\label{le-pw-subdigraph}
Let $G$ be some digraph and $H$ be an induced subdigraph
of $G$, then $\dpws(H)\leq \dpws(G)$.
\end{lemma}

\begin{lemma}[Bidirectional complete subdigraph]\label{cl1}
Let $G=(V,E)$ be some
digraph, $G'=(V',E')$ with $V'\subseteq V$ be a bidirectional complete subdigraph, 
and $(X_1, \ldots, X_r)$  a directed path-decomposition of $G$.
Then there is some $i$, $1\leq i\leq r$, such that $V'\subseteq X_i$.
\end{lemma}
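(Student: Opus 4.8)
The plan is to show that a bidirectional complete subdigraph must be contained within a single bag of any directed path-decomposition, mirroring the classical fact that a clique must live in a single bag of an undirected path- or tree-decomposition.

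First I would set up the key structural tool. For each vertex $v \in V'$, condition (dpw-\ref{dp1}) guarantees that $v$ appears in some bag, and condition (dpw-\ref{dp3}) guarantees that the set of indices $\{i : v \in X_i\}$ forms a contiguous interval $I_v = [a_v, b_v]$. So to each vertex of $V'$ I associate a closed interval of indices. The heart of the argument is to prove that these intervals pairwise intersect: for any two distinct $u, v \in V'$, we have $I_u \cap I_v \neq \emptyset$. This is where the bidirectional completeness enters. Since $G'$ is bidirectional complete, both arcs $(u,v)$ and $(v,u)$ lie in $E$. Applying condition (dpw-\ref{dp2}) to the arc $(u,v)$ yields indices $i \leq j$ with $u \in X_i$ and $v \in X_j$, so $a_u \leq i \leq j$ and $a_v \leq j$, giving in particular $a_u \leq b_v$ after combining with the contiguity. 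Applying (dpw-\ref{dp2}) to the arc $(v,u)$ symmetrically yields $a_v \leq b_u$. Together $a_u \leq b_v$ and $a_v \leq b_u$ are exactly the condition that the two intervals $[a_u,b_u]$ and $[a_v,b_v]$ overlap, so $I_u \cap I_v \neq \emptyset$.

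Next I would invoke the Helly property for intervals on the line: a finite family of closed intervals in $\IR$ (or in $\{1,\ldots,r\}$) that pairwise intersect has a common point. Since the intervals $\{I_v : v \in V'\}$ pairwise intersect by the previous step, there is a common index $i$ lying in every $I_v$. By definition of $I_v$, the vertex $v$ belongs to bag $X_i$ for every $v \in V'$, which is precisely the claim $V' \subseteq X_i$.

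The main obstacle, and the only place requiring genuine care, is establishing the pairwise intersection of the intervals from the two oppositely directed arcs; everything else is bookkeeping with the interval endpoints and the standard one-dimensional Helly argument. I would be careful to handle the directionality correctly: a single arc $(u,v)$ only gives the one-sided inequality $a_u \leq b_v$ (since (dpw-\ref{dp2}) places the tail no later than the head), and it is exactly the presence of \emph{both} arcs that supplies the reverse inequality and closes the argument. This is why bidirectional completeness, rather than mere semicompleteness or a single orientation, is needed, and it is the crux of why the lemma holds for bidirectional complete subdigraphs specifically.
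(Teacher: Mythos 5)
Your proof is correct, and it takes a recognizably different route from the paper's. The paper proceeds by induction on $|V'|$: it assumes a common bag $X_i$ for $V'\setminus\{v\}$, and if $i$ lies outside the interval $[r_1,r_2]$ of bags containing $v$, it slides to the nearer endpoint $j'\in\{r_1,r_2\}$ of that interval and uses the two opposite arcs between $v$ and each $w\in V'\setminus\{v\}$, together with (dpw-\ref{dp3}), to show every such $w$ survives in $X_{j'}$. You instead prove the pairwise statement once and for all --- the two opposite arcs force $a_u\leq b_v$ and $a_v\leq b_u$, hence overlapping intervals --- and then invoke the one-dimensional Helly property to get a common index. The underlying technical content is the same (both proofs extract interval overlap from (dpw-\ref{dp2}) applied to both orientations, which is exactly where bidirectionality is needed, as you correctly emphasize), but your packaging is more modular and avoids the paper's case analysis $i<r_1$ versus $i>r_2$; it is also the direct analogue of the classical clique-containment argument for undirected path-decompositions. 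The paper's induction, on the other hand, is fully self-contained --- it never cites Helly, and in effect its inductive step re-derives the Helly property in this special setting --- and its bag-sliding technique recurs in the neighboring lemmas of the paper (Lemma \ref{cbsl} and Lemma \ref{cl2}), so the authors get some reuse out of that machinery. One small point of care in your version: the Helly step needs the intervals to be nonempty, which you correctly secured via (dpw-\ref{dp1}) before taking pairwise intersections.
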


\begin{proof}
We show the claim by an induction on $|V'|$. 
If $|V'|=1$ then by (dpw-1)
there is some $i$, $1\leq i\leq r$, such that $V'\subseteq X_i$. 
Next let $|V'|>1$ and $v\in V'$.
By our induction hypothesis there is some  $i$, $1\leq i\leq r$, such that $V'-\{v\}\subseteq X_i$. 
By (dpw-3)  there are two integers $r_1$ and $r_2$,
$1\leq r_1\leq r_2\leq r$, such that $v\in X_j$ for all $r_1\leq j \leq r_2$. 
If $r_1\leq i \leq r_2$ then $V'\subseteq X_i$. 
Next suppose that $i< r_1$ or $r_2<i$. If $i< r_1$ we define $j'=r_1$ and if 
$i> r_2$ we define $j'=r_2$.
We will show that $V'\subseteq X_{j'}$. 
Let $w\in V'-\{v\}$. Since there are two arcs $(v,w)$ and $(w,v)$ in $E$  
by (dpw-2) 
there is some  $r_1\leq j'' \leq r_2$ such that $v,w\in X_{j''}$. 
By (dpw-3) we conclude $w\in X_{j'}$. 
Thus $V'-\{v\}\subseteq X_{j'}$ and  $\{v\}\subseteq X_{j'}$, i.e. $V'\subseteq X_{j'}$.
\end{proof}

\begin{lemma}\label{cbsl} 
Let $G=(V,E)$ be a
digraph and $(X_1, \ldots, X_r)$  a di\-rec\-ted path-decomposition of $G$.
Further let $A,B\subseteq V$, $A\cap B=\emptyset$, 
and $\{(u,v),(v,u)~|~u\in A, v\in B\}\subseteq E$.
Then there is some $i$, $1\leq i\leq r$, such that $A\subseteq X_i$ or $B\subseteq X_i$.
\end{lemma}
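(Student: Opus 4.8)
The plan is to encode each vertex by the interval of bag-indices in which it occurs and then to reduce the statement to a purely combinatorial fact about two families of intervals on the line. For a vertex $v\in V$ let $I(v)=\{\ell \mid v\in X_\ell\}$. By (dpw-1) this set is nonempty, and by (dpw-3) it consists of consecutive integers, so I may write $I(v)=[l(v),r(v)]$ with $l(v)\le r(v)$.

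First I would record that any $a\in A$ and $b\in B$ have overlapping intervals. Since both arcs $(a,b)$ and $(b,a)$ lie in $E$, applying (dpw-2) to $(a,b)$ yields indices $i\le j$ with $a\in X_i$ and $b\in X_j$, whence $l(a)\le i\le j\le r(b)$ and therefore $l(a)\le r(b)$; applying (dpw-2) to $(b,a)$ symmetrically gives $l(b)\le r(a)$. As two integer intervals satisfy $[l(a),r(a)]\cap[l(b),r(b)]\neq\emptyset$ exactly when $l(a)\le r(b)$ and $l(b)\le r(a)$, the intervals $I(a)$ and $I(b)$ intersect for every $a\in A$ and every $b\in B$.

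The core of the argument is then the dichotomy obtained by comparing the extreme endpoints of the $A$-intervals. Set $L=\max_{a\in A} l(a)$ and $R=\min_{a\in A} r(a)$ and distinguish two cases. If $L\le R$, then for every $a\in A$ we have $l(a)\le L\le R\le r(a)$, so $L\in I(a)$; hence $A\subseteq X_{L}$ and we are done. If instead $L>R$, choose $a_1,a_2\in A$ with $l(a_1)=L$ and $r(a_2)=R$, so that $r(a_2)=R<L=l(a_1)$ and the intervals $I(a_1),I(a_2)$ are disjoint with $I(a_2)$ lying strictly to the left of $I(a_1)$. For an arbitrary $b\in B$, the intersections $I(b)\cap I(a_2)\neq\emptyset$ and $I(b)\cap I(a_1)\neq\emptyset$ give $l(b)\le r(a_2)$ and $l(a_1)\le r(b)$, so that $l(b)\le r(a_2)<l(a_1)\le r(b)$ and thus $r(a_2)\in I(b)$. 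Since $b$ was arbitrary, $B\subseteq X_{R}$, which completes this case.

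The only genuinely delicate point is the passage from ``every cross-pair of intervals meets'' to ``one whole side meets in a common bag''; this is exactly what the case split on $L$ versus $R$ handles, and it is essentially the one-dimensional Helly phenomenon written out by hand so that no external theorem need be cited. Everything else is bookkeeping with the endpoints $l(\cdot)$ and $r(\cdot)$ together with direct appeals to (dpw-1), (dpw-2), and (dpw-3); no properties of directed path-decompositions beyond those already used in Lemma~\ref{cl1} are required.
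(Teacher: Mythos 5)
Your proof is correct and follows essentially the same route as the paper: both arguments encode each vertex as the interval of bag indices in which it occurs, use the two opposite arcs between any $a\in A$ and $b\in B$ together with (dpw-2) to force $I(a)\cap I(b)\neq\emptyset$, and then exploit the gap between two disjoint intervals on one side to trap the whole other side in a single bag. The only difference is organizational: the paper argues by contraposition, supposing $B$ lies in no common bag and asserting (without justification) that two $B$-intervals must then be disjoint, whereas your dichotomy $L=\max_{a\in A} l(a)$ versus $R=\min_{a\in A} r(a)$ makes that one-dimensional Helly step explicit, so your write-up is, if anything, the more complete of the two.
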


\begin{proof}
Suppose that $B\not\subseteq X_i$ for all $1\leq i \leq r$.
Then there are $b_1,b_2\in B$ and $i_{1,\ell},i_{1,r},i_{2,\ell},i_{2,r}$, $1\leq i_{1,\ell}\leq i_{1,r}<i_{2,\ell}\leq i_{2,r}\leq r$,
such that $\{i~|~b_1\in X_i\}=\{i_{1,\ell},\ldots,i_{1,r}\}$ 
and  $\{i~|~b_2\in X_i\}=\{i_{2,\ell},\ldots,i_{2,r}\}$ (and both sets are disjoint).
Let $a\in A$. Since  $(b_2,a)\in E$ there is some $i_{2,\ell}\leq i \leq r$ such that $a\in X_i$
and since  $(a,b_1)\in E$ there is some $1\leq j\leq i_{1,r}$ such that $a\in X_j$.
By (dpw-3) it is true that $a\in X_k$ for every $ i_{1,r}\leq k \leq i_{2,\ell}$.

If we suppose  $A\not\subseteq X_i$ for all $1\leq i \leq r$ it follows that 
$b\in X_k$ for every $ i_{1,r}\leq k \leq i_{2,\ell}$. 
\end{proof}

\begin{lemma}\label{cl2}
Let  ${\cal X}=(X_1, \ldots, X_r)$ be a directed path-decomposition of 
some digraph $G=(V,E)$.
Further let $A,B\subseteq V$, $A\cap B=\emptyset$, and $\{(u,v),(v,u)~|~u\in A, v\in B\}\subseteq E$.
If there is some $i$, $1\leq i\leq r$, such that $A\subseteq X_i$ then there
are $1\leq i_1\leq i_2\leq r$ such that
\begin{enumerate}
\item for all $i$, $i_1\leq i \leq i_2$ is $A\subseteq X_i$,
\item $B\subseteq \cup_{i=i_1}^{i_2} X_i$, and
\item ${\cal X}'=(X'_{i_1},\ldots,X'_{i_2})$ where $X'_i=X_i\cap (A\cup B)$ 
is a directed path-decomposition of the digraph induced by $A\cup B$.
\end{enumerate}
\end{lemma}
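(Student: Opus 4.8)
The plan is to read off the three claims from the interval structure that condition (dpw-3) imposes on each vertex. For a vertex $v\in V$ write $J_v=\{i\mid v\in X_i\}$; by (dpw-3) this is an interval of consecutive indices, nonempty by (dpw-1). Since $A\subseteq X_i$ holds exactly when $i\in\bigcap_{a\in A}J_a$, I would set $I:=\bigcap_{a\in A}J_a$. As a finite intersection of integer intervals $I$ is again an interval, and it is nonempty by the hypothesis that some bag contains all of $A$; writing $I=[i_1,i_2]$ then gives claim~1 at once, because $A\subseteq X_i$ holds precisely for $i_1\le i\le i_2$.

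For claim~2 I would fix $b\in B$ and an arbitrary $a\in A$. The two vertices $a,b$ induce a bidirectional complete subdigraph, since by hypothesis both arcs $(a,b)$ and $(b,a)$ lie in $E$, so Lemma~\ref{cl1} yields a bag containing both, i.e.\ $J_a\cap J_b\neq\emptyset$. Hence the interval $J_b$ meets every interval $J_a$, $a\in A$, while the $J_a$ share the common point guaranteed by $I\neq\emptyset$. A one-dimensional Helly-type argument then forces $J_b\cap I\neq\emptyset$: were $J_b$ disjoint from $I=[i_1,i_2]$ it would lie entirely on one side, say $\max J_b<i_1=\max_{a}\min J_a=\min J_{a^\ast}$ for the index $a^\ast$ attaining the maximum, contradicting $J_{a^\ast}\cap J_b\neq\emptyset$ (the right-hand case being symmetric). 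Thus each $b\in B$ occurs in some bag with index in $[i_1,i_2]$, which is claim~2.

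For claim~3 I would verify that ${\cal X}'$ satisfies (dpw-1)--(dpw-3) for $H:=G[A\cup B]$. Condition (dpw-1) is immediate from claims~1 and 2, since then every vertex of $A\cup B$ appears in some bag of the range; condition (dpw-3) holds because $\{i\in[i_1,i_2]\mid v\in X'_i\}=J_v\cap[i_1,i_2]$ is an intersection of two intervals. The only real work is (dpw-2), which I would settle by a case split on where the endpoints of an arc $(u,v)$ of $H$ lie. If $u\in A$ (respectively $v\in A$) then that endpoint lies in every bag of $[i_1,i_2]$, so it can be placed at $i_1$ (respectively $i_2$) and paired with an occurrence of the other endpoint furnished by claim~1 or~2; this handles every arc with an endpoint in $A$.

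The delicate case, and the main obstacle, is an arc $(u,v)$ with $u,v\in B$, where neither endpoint need be present throughout the range. From the original decomposition one has $\min J_u\le\max J_v$, and combining this with the facts that $J_u$ and $J_v$ each meet $[i_1,i_2]$ (claim~2), I would check that the truncated witnesses $\max(\min J_u,i_1)$ and $\min(\max J_v,i_2)$ still respect the required ordering and still lie in the respective intervals, so that they certify the arc inside $[i_1,i_2]$. This verification is exactly the general fact that truncating a directed path-decomposition to a contiguous block of bags and restricting to a vertex set all of whose vertices appear in that block again yields a directed path-decomposition of the induced subdigraph, the sole nontrivial point being the survival of (dpw-2).
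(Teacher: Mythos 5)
Your proof is correct, and it shares the paper's skeleton: the same choice $i_1=\min\{i \mid A\subseteq X_i\}$, $i_2=\max\{i \mid A\subseteq X_i\}$, claim 1 from the interval property (dpw-3), and claim 3 by checking the three axioms for the truncated, restricted sequence. You diverge from the paper in the two substantive sub-arguments, and both of your replacements are sound. For claim 2, the paper first augments $G$ to $G'$ by adding all arcs inside $A$ (noting that ${\cal X}$ remains a directed path-decomposition of $G'$ because $A$ already lies in a common bag), and then applies Lemma \ref{cl1} once to the bidirectional clique $A\cup\{b\}$ of $G'$, which immediately yields a bag inside $[i_1,i_2]$ containing $b$; you instead apply Lemma \ref{cl1} only to pairs $\{a,b\}$ and conclude via a one-dimensional Helly argument ($J_b$ meets every $J_a$, the $J_a$ have a common point, hence $J_b$ meets $I=\bigcap_{a\in A}J_a$, using $i_1=\max_a \min J_a$ and $i_2=\min_a\max J_a$). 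The paper's augmentation trick reuses the clique lemma wholesale; yours avoids modifying the graph at the cost of the (easy) Helly step. For the delicate part of claim 3 --- an arc with both endpoints in $B$ --- the paper gives a terse proof by contradiction from (dpw-3), whereas you exhibit explicit witnesses $\max(\min J_u,i_1)\le\min(\max J_v,i_2)$ and verify they lie in $J_u\cap[i_1,i_2]$ and $J_v\cap[i_1,i_2]$ respectively, using $\min J_u\le\max J_v$ and the fact that both intervals meet $[i_1,i_2]$; this case analysis closes correctly in all four max/min combinations and is, if anything, easier to verify than the paper's argument.
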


\begin{proof}
Let $i_1=\min\{i~|~A\subseteq X_i\}$ and $i_2=\max\{i~|~A\subseteq X_i\}$. 
Since ${\cal X}$ satisfies (dpw-3),   it holds (1.).

Since there is some $i$, $1\leq i\leq r$, such that $A\subseteq X_i$ 
we know that ${\cal X}=(X_1, \ldots, X_r)$ is also a directed path-decomposition of 
$G'=(V,E')$, where $E'=E\cup\{(u,v)~|~u,v \in A, u\neq v\}$.
For every $b\in B$ the graph with vertex set $\{b\}\cup A$ is bidirectional complete subdigraph
of $G'$ which implies by  Lemma \ref{cl1} that there is some $i$, $i_1\leq i \leq  i_2$ such that 
$A\cup\{b\}\subseteq X_i$. Thus there is some $i$, $i_1\leq i \leq  i_2$ such that 
$b\in X_i$ which leads to (2.).

In order to show (3.) we observe that for the sequence ${\cal X}'=(X'_{i_1},\ldots,X'_{i_2})$
condition (dpw-1) holds by (1.) and (2.).

By (1.) and (2.) the arcs between two vertices from $A$ and the arcs between
a vertex from $A$ and a vertex from $B$ satisfy  (dpw-2). 
So let $(b',b'')\in E$ such that $b',b''\in B$.
By (2.) we know that  $b'\in X_i$ and $b''\in X_j$ for $i_1\leq i,j\leq i_2$.
If $j<i$ then by (dpw-3) for  ${\cal X}$ there is some $X_{j'}$,
$j'>i_2$ such that $b''\in X_{j'}$ but by (dpw-3) for ${\cal X}$ 
is $b''\in X_i$.

Further ${\cal X}'$ satisfies (dpw-3) since ${\cal X}$ satisfies (dpw-3).
\end{proof}

\begin{theorem}\label{th-pw}
Let $G=(V_G,E_G)$ and $H=(V_H,E_H)$ be two vertex-disjoint digraphs,
then the following properties hold. 
\begin{enumerate}[1.]
\item
$\dpws(G\oplus H)=\max\{\dpws(G),\dpws(H)\}$

\item 
$\dpws(G\oslash H)=\max\{\dpws(G),\dpws(H)\}$

\item 
$\dpws(G\ominus H)=\max\{\dpws(G),\dpws(H)\}$

\item 
$\dpws(G\otimes H)=\min\{\dpws(G)+|V_H|,\dpws(H)+|V_G|\}$
\end{enumerate}
\end{theorem}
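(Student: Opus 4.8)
The plan is to prove each of the four equalities by establishing a matching upper and lower bound, since directed path-width is defined as a minimum width over all decompositions. For every case the lower bound comes essentially for free from Lemma~\ref{le-pw-subdigraph}: since $G$ and $H$ are induced subdigraphs of each of the four compositions, we immediately get $\dpws(G\star H)\geq\max\{\dpws(G),\dpws(H)\}$ for $\star\in\{\oplus,\oslash,\ominus\}$, and similarly a lower bound for the series composition. The real content is therefore to exhibit a directed path-decomposition of the composed digraph whose width matches the claimed value; the constructions are explicit, which is what makes the overall linear-time algorithm possible.

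For the first three parts I would construct the decomposition by concatenation. Let $(X_1,\ldots,X_p)$ be an optimal directed path-decomposition of $G$ and $(Y_1,\ldots,Y_q)$ an optimal one of $H$. For the disjoint union $G\oplus H$ there are no arcs between the two vertex sets, so the concatenation $(X_1,\ldots,X_p,Y_1,\ldots,Y_q)$ satisfies (dpw-1)--(dpw-3) trivially and has width $\max\{\dpws(G),\dpws(H)\}$. For the order composition $G\oslash H$ all added arcs go from $V_G$ to $V_H$, i.e.\ forward in the concatenated sequence, so condition (dpw-2) is still met by the very same concatenation, giving the same width. The directed union $G\ominus H$ is the interesting case among these three: here the added arcs are an arbitrary subset of those from $V_G$ to $V_H$, so they are again all forward arcs, and the identical concatenated sequence works verbatim. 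Thus parts~1--3 reduce to the single observation that a left-to-right concatenation respects any arcs directed left-to-right, and each added arc in these three operations is of that form.

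The fourth part, the series composition, is where I expect the main obstacle to lie, and it requires a genuinely different construction together with a matching lower-bound argument. For the upper bound, since $G\otimes H$ contains all arcs in both directions between $V_G$ and $V_H$, the pair $(V_G,V_H)$ behaves like the bidirectional complete bipartite situation of Lemma~\ref{cbsl} and Lemma~\ref{cl2}. To realize the bound $\dpws(G)+|V_H|$ I would take an optimal decomposition $(X_1,\ldots,X_p)$ of $G$ and simply add all of $V_H$ to every bag, obtaining $(X_1\cup V_H,\ldots,X_p\cup V_H)$; every arc incident to a vertex of $V_H$ is now covered (both endpoints appear together in every bag), the arcs inside $G$ are covered as before, (dpw-3) still holds, and the width is $\dpws(G)+|V_H|$. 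By symmetry one also gets a decomposition of width $\dpws(H)+|V_G|$, so $\dpws(G\otimes H)\leq\min\{\dpws(G)+|V_H|,\dpws(H)+|V_G|\}$.

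The hard part will be the matching lower bound $\dpws(G\otimes H)\geq\min\{\dpws(G)+|V_H|,\dpws(H)+|V_G|\}$. Here I would take an arbitrary optimal decomposition ${\cal X}$ of $G\otimes H$ and apply Lemma~\ref{cbsl} with $A=V_G$ and $B=V_H$ (whose hypothesis holds because the series composition inserts both arcs between every cross pair): some bag $X_i$ contains all of $V_G$ or all of $V_H$. Assume without loss of generality $V_H\subseteq X_i$. Then I would invoke Lemma~\ref{cl2} with the roles arranged so that the consecutive window $(X'_{i_1},\ldots,X'_{i_2})$ restricted to $V_G\cup V_H$ is a directed path-decomposition of the induced subdigraph on $V_G\cup V_H$ in which every bag contains all of $V_H$; deleting $V_H$ from each bag of this window yields a valid directed path-decomposition of $G$ (all cross-arcs and $H$-internal arcs disappear, and the $G$-internal structure is preserved), so some bag in the window has size at least $\dpws(G)+1$, and since it also contains all of $V_H$ its total size is at least $\dpws(G)+|V_H|+1$, giving width $\geq\dpws(G)+|V_H|$. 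Running the symmetric argument in the case $V_G\subseteq X_i$ yields width $\geq\dpws(H)+|V_G|$, and combining the two cases produces the minimum. The delicate point to get right is verifying that restricting and then deleting $V_H$ genuinely preserves conditions (dpw-1)--(dpw-3) for $G$, which is exactly what the machinery of Lemma~\ref{cl2} is set up to guarantee.
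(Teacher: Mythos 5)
Your proposal is correct and follows essentially the same route as the paper's own proof: the concatenation construction with Lemma~\ref{le-pw-subdigraph} for the lower bound in parts~1--3, the ``add $V_H$ to every bag'' construction for the upper bound in part~4, and Lemma~\ref{cbsl} together with Lemma~\ref{cl2} for the matching lower bound, differing only in the purely symmetric choice of which vertex set is assumed to lie in a single bag (you restrict the window to $V_G$ where the paper intersects with $V_H$).
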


\begin{proof}
\begin{enumerate}[1.] 
\item 
In order to show  $\dpws(G\oplus H)\leq \max\{\dpws(G),\dpws(H)\}$
we consider a directed path-decomposition $(X_1, \ldots, X_r)$ for $G$ and
a directed path-decomposition $(Y_1, \ldots, Y_s)$  for $H$. Then 
$(X_1, \ldots, X_r,Y_1, \ldots, Y_s)$ leads to a directed path-decomposition of $G\oplus H$.

Since $G$ and $H$ are induced subdigraphs of $G\oplus H$, by Lemma \ref{le-pw-subdigraph}
the directed path-width of both digraphs leads to a lower bound on the directed path-width
for the combined graph.

\item 
By the same arguments as used for (1.).
\item 
By the same arguments as used for (1.).

\item
In order to show  $\dpws(G\otimes H)\leq \dpws(G)+|V_H|$ 
let $(X_1, \ldots, X_r)$ be a directed path-decomposition of $G$. 
Then we
obtain by $(X_1\cup V_H, \ldots, X_r\cup V_H)$ a directed path-decomposition of $G\otimes H$.
In the same way a directed path-decomposition of $H$ leads to a directed path-decomposition of $G\otimes H$
which implies that $\dpws(G\otimes H)\leq \dpws(H)+|V_G|$.
Thus $\dpws(G\otimes H)\leq \min\{\dpws(G)+|V_H|,\dpws(H)+|V_G|\}$.

For the reverse direction let
${\cal X}=(X_1, \ldots, X_r)$  be a directed path-de\-com\-position of $G\otimes H$.
By Lemma \ref{cbsl} we know that there is some $i$, $1\leq i\leq r$, such that 
$V_G\subseteq X_i$ or $V_H\subseteq X_i$. We assume that $V_G\subseteq X_i$.
We apply Lemma \ref{cl2} using $G\otimes H$ as digraph, $A=V_G$ and $B=V_H$ in order
to obtain a directed path-decomposition  ${\cal X}'=(X'_{i_1},\ldots,X'_{i_2})$ for $G\otimes H$
where for all $i$, $i_1\leq i \leq i_2$, it holds $V_G\subseteq X_i$ and $V_H\subseteq \cup_{i=i_1}^{i_2} X_i$.
Further ${\cal X}''=(X''_{i_1},\ldots,X''_{i_2})$, where $X''_{i}= X'_{i}\cap V_H$
leads to a directed path-decomposition of $H$.
Thus there is some $i$, $i_1\leq i \leq i_2$, such that
$|X_i\cap V_H|\geq \dpws(H)+1$.  Since $V_G\subseteq X_i$, we know
that $|X_i\cap V_H|=|X_i|-|V_G|$ and thus
$|X_i|\geq  |V_G| + \dpws(H)+1$.
Thus the width of directed path-decomposition $(X_1, \ldots, X_r)$
is at least $\dpws(H)+|V_G|$.

If we assume that $V_H\subseteq X_i$ it follows that the
width of directed path-decomposition $(X_1, \ldots, X_r)$
is at least $\dpws(G)+|V_H|$.
\end{enumerate}
\end{proof}



\begin{lemma}\label{le-order}
 Let $G$ and $H$ be two directed co-graphs,
then $\pws(\un(G\oslash H))>\dpws(G\oslash H)$.
\end{lemma}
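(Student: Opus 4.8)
The plan is to reduce the claim to a computation of undirected path-width of a join composition. First I would observe that $G\oslash H$ is the disjoint union of $G$ and $H$ together with every arc directed from a vertex of $G$ to a vertex of $H$. Omitting the directions therefore adds, to the edges of $\un(G)$ and of $\un(H)$, an edge between every vertex of $G$ and every vertex of $H$; that is, $\un(G\oslash H)=\un(G)\times\un(H)$, the join of the two underlying co-graphs. Hence the left-hand side of the asserted inequality is exactly $\pws(\un(G)\times\un(H))$.

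Second, I would evaluate this join path-width by passing to the complete biorientation. The digraph $\overleftrightarrow{\un(G)\times\un(H)}$ coincides with the series composition $\overleftrightarrow{\un(G)}\otimes\overleftrightarrow{\un(H)}$, since both contain the two arcs $(u,v)$ and $(v,u)$ for every edge inside $\un(G)$, inside $\un(H)$, and across the two parts. Applying Lemma \ref{le-c-bi} twice together with Theorem \ref{th-pw}(4) then gives
$$\pws(\un(G)\times\un(H))=\min\{\,\pws(\un(G))+|V(H)|,\ \pws(\un(H))+|V(G)|\,\}.$$

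Finally, by Theorem \ref{th-pw}(2) the right-hand side of the lemma equals $\dpws(G\oslash H)=\max\{\dpws(G),\dpws(H)\}$, so it remains to show that both summands in the minimum strictly exceed both $\dpws(G)$ and $\dpws(H)$. Here I would use that $\dpws(G)\le\pws(\un(G))$ and $\dpws(H)\le\pws(\un(H))$ by Lemma \ref{le-d-ud}, and that $0\le\pws(F)\le n-1$ for a graph $F$ on $n\ge1$ vertices; since $G$ and $H$ are co-graphs, $|V(G)|,|V(H)|\ge1$. The summand $\pws(\un(G))+|V(H)|$ exceeds $\dpws(G)$ because $\pws(\un(G))\ge\dpws(G)$ and $|V(H)|\ge1$, and it exceeds $\dpws(H)$ because $\dpws(H)\le\pws(\un(H))\le|V(H)|-1<\pws(\un(G))+|V(H)|$; the other summand is handled symmetrically. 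I expect this last step to be the only delicate point: the correct way to bound $\dpws(H)$ against the summand $\pws(\un(G))+|V(H)|$ is through the trivial bound $\pws(\un(H))\le|V(H)|-1$ rather than through the matching summand, and keeping track of which of the four comparisons uses which estimate is where an off-by-one slip would occur. Once all four are verified, $\min\{\cdots\}>\max\{\dpws(G),\dpws(H)\}$ follows.
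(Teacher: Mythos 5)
Your proof is correct, and its overall skeleton matches the paper's: both identify $\un(G\oslash H)$ with the join $\un(G)\times\un(H)$, both reduce the problem to the formula $\pws(\un(G)\times\un(H))=\min\{\pws(\un(G))+|V_H|,\ \pws(\un(H))+|V_G|\}$, and both finish with elementary bounds against $\max\{\dpws(G),\dpws(H)\}=\dpws(G\oslash H)$. The genuine difference lies in how you justify the middle step. The paper simply cites the undirected co-graph result of \cite{BM93} for the join formula, whereas you re-derive it internally: you observe that $\overleftrightarrow{\un(G)\times\un(H)}=\overleftrightarrow{\un(G)}\otimes\overleftrightarrow{\un(H)}$ and combine Lemma \ref{le-c-bi} (applied three times) with the paper's own Theorem \ref{th-pw}(4). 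This buys you a proof that is self-contained relative to the paper's directed results and, as a by-product, makes explicit that the co-graph hypothesis on $G$ and $H$ is never used -- the lemma holds for arbitrary vertex-disjoint digraphs, and the \cite{BM93} join formula even falls out as a corollary rather than an input. The closing arithmetic is also organized differently but equivalently: the paper chains
$\min\{\pws(\un(G))+|V_H|,\pws(\un(H))+|V_G|\} > \min\{\pws(\un(G))+\dpws(H),\pws(\un(H))+\dpws(G)\} \geq \dpws(G)+\dpws(H) \geq \max\{\dpws(G),\dpws(H)\}$, using $|V_H|>\dpws(H)$, $|V_G|>\dpws(G)$ and Lemma \ref{le-d-ud}, while you check the four pairwise strict inequalities (each summand exceeds each of $\dpws(G)$, $\dpws(H)$) directly; your care about which comparison uses $\pws(\un(H))\leq|V_H|-1$ versus which uses Lemma \ref{le-d-ud} plus $|V_H|\geq 1$ is exactly the bookkeeping the paper's chain packages more compactly.
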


\begin{proof}
Let $G$ and $H$ be two directed co-graphs.
$$\begin{array}{lcl}
\pws(\un(G\oslash H)) & = & \pws(\un(G)\times\un(H)) \\
 &  = & \min\{\pws(\un(G))+|V_H|,\pws(\un(H))+|V_G|\}  ~~~~ \text{(by \cite{BM93})}\\
 &  > & \min\{\pws(\un(G))+ \dpws(H),\pws(\un(H))+ \dpws(G)\}  ~~~~ \\
 &  \geq & \min\{\dpws(G)+ \dpws(H),\dpws(H)+ \dpws(G)\}  ~~~~ \\
 &  = & \dpws(G)+ \dpws(H)\\
 &  \geq  & \max\{\dpws(G),\dpws(H)\}\\
 &  = & \dpws(G\oslash H)
\end{array}
$$
\end{proof}

\begin{corollary}\label{xx}
Let $G$ be some directed co-graph, then $\dpws(G)=\pws(u(G))$ if and only if there is
an expression for $G$ without any order operation. 
Further $\dpws(G)=0$ if and only if there is
an expression for $G$ without any series operation.
\end{corollary}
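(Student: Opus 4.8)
The plan is to reduce both biconditionals to intrinsic structural properties of $G$ and then feed these into Theorem~\ref{th-pw}, Lemma~\ref{le-c-bi}, Lemma~\ref{le-order}, and the corresponding undirected identities of \cite{BM93}. The key preliminary observation is that $\oplus$ and $\otimes$ only ever introduce \emph{pairs} of oppositely directed arcs, so a directed co-graph admits an expression without any order operation if and only if it is a complete biorientation, i.e.\ $G=\overleftrightarrow{\un(G)}$ (here I use that $\un(G)$ is a co-graph, so its co-tree can be read back as an $\oplus$/$\otimes$-expression). Symmetrically, $\oplus$ and $\oslash$ never create a pair $(u,v),(v,u)$, so $G$ admits an expression without any series operation if and only if $G$ has no bidirectional pair of arcs: if such a pair exists some $\otimes$ is forced, and if none exists no di-co-tree can use $\otimes$ at all. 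I would record these two equivalences first, turning both statements into claims about $\dpws(G)$ versus $\pws(\un(G))$ and versus $0$.

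For the second sentence I would then argue as follows. If $G$ has a series-free expression, induction on a binary di-co-tree (Lemma~\ref{le-di-co}) using parts~1 and~2 of Theorem~\ref{th-pw} — both returning $\max\{\dpws(G_1),\dpws(G_2)\}$ — together with $\dpws(\bullet)=0$ gives $\dpws(G)=0$. Conversely, if $G$ contains a bidirectional pair $(u,v),(v,u)$, then $\{u,v\}$ induces a bidirectional complete subdigraph, so by Lemma~\ref{cl1} every directed path-decomposition of $G$ has a bag containing both $u$ and $v$, whence $\dpws(G)\ge 1$; contrapositively $\dpws(G)=0$ forces $G$ to be free of such pairs and hence series-free. (One may also read $\dpws(G)\ge\min\{|V_1|,|V_2|\}\ge 1$ off part~4 of Theorem~\ref{th-pw} for the topmost series node.) This settles the second statement.

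For the first statement, the easy direction is ``order-free $\Rightarrow$ equality'': if $G$ is order-free then $G=\overleftrightarrow{\un(G)}$ and Lemma~\ref{le-c-bi} gives $\dpws(G)=\pws(\un(G))$ directly. For the converse I would prove the contrapositive by induction on a binary di-co-tree: assuming $G$ is not order-free I want $\dpws(G)<\pws(\un(G))$, already knowing $\dpws(G)\le\pws(\un(G))$ from Lemma~\ref{le-d-ud}. When the root operation is $\oslash$, Lemma~\ref{le-order} yields the strict inequality outright. When it is $\oplus$ or $\otimes$, I would combine the directed formulas of Theorem~\ref{th-pw} with the undirected identities $\pws(\un(G_1)\cup\un(G_2))=\max\{\pws(\un(G_1)),\pws(\un(G_2))\}$ and $\pws(\un(G_1)\times\un(G_2))=\min\{\pws(\un(G_1))+|V_2|,\pws(\un(G_2))+|V_1|\}$ of \cite{BM93}, and try to propagate a strict gap from whichever child is not order-free.

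The main obstacle is exactly this propagation through the $\oplus$ and $\otimes$ nodes. A strict inequality $\dpws(G_1)<\pws(\un(G_1))$ in one child need not survive the $\max$ (for $\oplus$) or the $\min$ (for $\otimes$) once the other, order-free, child attains the controlling value, so a naive induction on an arbitrary di-co-tree does not close. To repair it I would not argue on an arbitrary tree but choose an expression in which an unavoidable order operation sits at a \emph{dominant} position — pushing the order composition up to the root of a suitable subtree and bounding the sibling's contribution — so that Lemma~\ref{le-order} can be invoked where the gap is not masked by the outer $\max$ or $\min$. Making this placement precise, and proving that such a dominant order node always exists whenever $G$ fails to be a complete biorientation, is the step I expect to require the most care.
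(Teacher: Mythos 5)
Your two structural reductions (order-free expression $\Leftrightarrow$ $G$ is a complete biorientation; series-free expression $\Leftrightarrow$ $G$ has no bidirected pair of arcs), your proof of the second biconditional via Theorem~\ref{th-pw} and Lemma~\ref{cl1}, and your proof of the direction ``order-free $\Rightarrow \dpws(G)=\pws(\un(G))$'' are all correct; indeed you treat the second statement more carefully than the paper does. The genuine gap is the converse of the first statement, exactly where you stopped --- and it is worse than ``requires care'': no choice of a dominant order node can repair it, because that direction of the corollary is false. Take $G=(\bullet\oslash\bullet)\oplus(\bullet\otimes\bullet)$, i.e.\ vertices $u,v,x,y$ with arcs $(u,v)$, $(x,y)$, $(y,x)$. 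By Theorem~\ref{th-pw}, $\dpws(G)=\max\bigl\{\max\{0,0\},\,\min\{0+1,0+1\}\bigr\}=1$, while $\un(G)$ is the disjoint union of two copies of $K_2$, so $\pws(\un(G))=1$. Hence $\dpws(G)=\pws(\un(G))$, yet $G$ contains the arc $(u,v)$ without its reverse, so by your own observation every expression for $G$ must use an order operation. The strict gap produced at the $\oslash$ node (Lemma~\ref{le-order}) is exactly masked by the $\max$ at the $\oplus$ node --- the propagation failure you flagged. The masking also occurs under $\otimes$: for the strongly connected digraph $(\bullet\oslash\bullet)\otimes(\bullet\oplus\bullet)$ both the directed path-width and the path-width of the underlying graph equal $2$. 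So no induction, however the order node is positioned, can establish the claimed equivalence.

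You should also know that the paper's own proof commits precisely the error you were trying to avoid: it asserts that if there is a construction using an order operation, then Lemma~\ref{le-order} implies $\dpws(G)\neq\pws(\un(G))$. But Lemma~\ref{le-order} gives the strict inequality only when the order composition is the \emph{root} operation; it does not survive the $\max$ of part~1 or the $\min$ of part~4 of Theorem~\ref{th-pw}, as the examples above show. In other words, your diagnosed obstacle is not a defect of your strategy but a refutation of the remaining direction of the statement: the first biconditional only holds as stated if it is weakened (e.g., to digraphs whose expression has an order composition at the root, which is all that Lemma~\ref{le-order} covers), while your argument for the second biconditional stands as a complete and correct proof.
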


\begin{proof}
If there is a  construction without
order operation, then Theorem \ref{th-pw} and
the results of \cite{BM93} imply $\dpws(G)=\pws(u(G))$. If there is a  construction using an
order operation, Lemma \ref{le-order} implies that $\dpws(G)\neq\pws(u(G))$.
\end{proof}

\section{Directed tree-width}\label{tw-co}

An {\em acyclic} digraph ({\em DAG} for short) is a digraph without any 
cycles as subdigraph. An  out-tree  is a digraph with a distinguished root such that
all arcs are directed away from the root. For two vertices $u,v$ of an out-tree $T$
the notation $u\leq v$ means that there is a directed path on $\geq 0$ 
arcs from $u$ to $v$ and $u < v$ means that there is a directed path on $\geq 1$ 
arcs from $u$ to $v$.

Let $G=(V,E)$ be some digraph and $Z\subseteq V$. A vertex set $S\subseteq V$
is {\em $Z$-normal}, if there is no directed walk in $G-Z$ with first and last vertices in
$S$ that uses a vertex of $G-(Z\cup S)$.  That is, a set $S\subseteq V$ is $Z$-normal,
if every directed walk which leaves and again enters  $S$ must contain
only vertices from $Z\cup S$. Or,  a set $S\subseteq V$ is $Z$-normal, if every directed walk which leaves and
again enters $S$ must contain a vertex from $Z$.


\begin{definition}[directed tree-width, \cite{JRST01}]\label{D3}
A {\em (arboreal) tree-decomposition} of a digraph $G=(V_G,E_G)$ is a triple  $(T, \mathcal{X}, \mathcal{W})$. 
Here $T=(V_T,E_T)$ is an out-tree, 
$\mathcal{X}=\{X_e ~|~ e\in E_T\}$ and
$\mathcal{W}=\{W_r~|~ r \in V_T\}$ are  sets of subsets of $V_G$, such that the following
two conditions hold true.
\begin{enumerate}[(dtw-1)]
\item $\mathcal{W}=\{W_r~|~ r \in V_T\}$ is a partition of $V_G$ into nonempty 
subsets.\footnote{A remarkable difference to the undirected tree-width \cite{RS86} is that the sets
$W_r$ have to be disjoint and non-empty.}
\item For every $(u,v)\in E_T$ the set $\bigcup\{W_r ~|~ r\in V_T, v\leq r\}$ is 
$X_{(u,v)}$-normal. 
\end{enumerate}
The {\em width} of a  (arboreal) tree-decomposition $(T, \mathcal{X}, \mathcal{W})$ is 
$$\max_{r\in V_T} |W_r\cup \bigcup_{e \sim r} X_e|-1.$$
Here $e \sim r$ means that $r$ is one of the two vertices of arc $e$.
The {\em directed tree-width} of $G$, $\dtws(G)$ for short, is the smallest
integer $k$ such that there is a (arboreal) tree-decomposition $(T, \mathcal{X}, \mathcal{W})$ of $G$
of width $k$. 
\end{definition}

\begin{remark}[$Z$-normality]
Please note that our definition of $Z$-normality slightly  differs  from the following
definition in \cite{JRST01} where $S$ and $Z$ are disjoint. 
A vertex set $S\subseteq V-Z$ is {\em $Z$-normal}, 
if there is no directed walk in $G-Z$ with first and last vertices in
$S$ that uses a vertex of $G-(Z\cup S)$.  
That is, a set $S\subseteq V-Z$ is $Z$-normal,
if every directed walk in $G-Z$ which leaves and again enters  $S$ must contain
only vertices from $Z\cup S$.
Or, a set $S\subseteq V-Z$ is $Z$-normal, if every directed walk which leaves and
again enters $S$ must contain a vertex from $Z$, see \cite{BG09}.

Every set $S\subseteq V-Z$  which is is $Z$-normal w.r.t.~the definition in \cite{JRST01}
is also $Z$-normal w.r.t.~our definition. Further a set  $S\subseteq V$ 
which is $Z$-normal w.r.t.~our definition, is also $Z-S$-normal
w.r.t.~the definition in \cite{JRST01}.
Thus the directed tree-width of a digraph is equal for both definitions of $Z$-normality.
\end{remark}

\begin{lemma}[\cite{JRST01}]\label{le-tw-d-ud}
Let $G$ be some digraph, then $\dtw(G)\leq \tws(\un(G))$.
\end{lemma}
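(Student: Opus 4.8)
The statement to prove is Lemma~\ref{le-tw-d-ud}: for every digraph $G$, we have $\dtw(G)\leq \tws(\un(G))$.

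\medskip

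The plan is to show that any (undirected) tree-decomposition of $\un(G)$ can be converted into a directed (arboreal) tree-decomposition of $G$ of no larger width. First I would recall the standard undirected notion: a tree-decomposition of $\un(G)$ is a pair $(T_0,\{B_t\}_{t\in V_{T_0}})$ with $T_0$ a tree, each vertex lying in some bag, each edge having both endpoints in a common bag, and the bags containing any fixed vertex forming a connected subtree; its width is $\max_t|B_t|-1$. The fundamental mismatch I must overcome is structural: in the directed definition the sets $W_r$ must \emph{partition} $V_G$ into nonempty parts and $T$ must be an \emph{out-tree}, whereas the undirected bags overlap freely and $T_0$ is unrooted. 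So the core of the argument is a careful reassignment of vertices that turns overlapping bags into a partition while keeping the overlap information in the edge sets $X_e$.

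\medskip

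The key steps, in order, would be as follows. Root $T_0$ arbitrarily and orient all edges away from the root to obtain an out-tree $T$ on the same node set. For each vertex $v\in V_G$, the nodes whose bag contains $v$ form a connected subtree, so among them there is a unique topmost node (closest to the root); assign $v$ to that node, i.e.\ let $W_r$ be the set of vertices whose topmost bag is $r$. This makes $\mathcal{W}=\{W_r\}$ a partition of $V_G$, though some $W_r$ may be empty; I would handle empty parts either by contracting such nodes into a neighbor or by a routine cleanup, since the directed definition demands nonempty parts. For each arc $e=(u,v)$ of $T$, set $X_e=B_u\cap B_v$, the shared content across that edge. The two conditions then need checking: (dtw-1) holds by construction after the emptiness fix, and for the width bound I would verify that $W_r\cup\bigcup_{e\sim r}X_e\subseteq B_r$, so that $|W_r\cup\bigcup_{e\sim r}X_e|-1\leq |B_r|-1\leq \tws(\un(G))$.

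\medskip

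The main obstacle will be verifying the $X_{(u,v)}$-normality condition (dtw-2): for every arc $(u,v)\in E_T$ the set $S=\bigcup\{W_r\mid v\leq r\}$ must be $X_{(u,v)}$-normal in $G$. Here I would argue that $S$ is exactly the set of vertices assigned to the subtree rooted at $v$, and that any vertex of $G$ lying in a bag both inside and outside this subtree must lie in $B_u\cap B_v=X_{(u,v)}$ by the subtree-connectivity of the undirected decomposition. The crux is then to show that a directed walk in $G-X_{(u,v)}$ leaving and re-entering $S$ is impossible without passing through $X_{(u,v)}$: any arc of $G$ is an edge of $\un(G)$, hence both endpoints share a bag, so an arc crossing the boundary of the subtree must have both endpoints in a boundary bag and therefore in $X_{(u,v)}$. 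Translating this boundary-crossing argument into the $Z$-normality language (using the paper's convention where $S$ and $Z$ need not be disjoint) is exactly where the care is required, and I expect that to be the technical heart of the proof.
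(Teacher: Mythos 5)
The paper never proves this lemma; it is stated as a quotation of \cite{JRST01}, so there is no in-paper argument to compare yours against. Judged on its own merits, your construction is the standard one and it is essentially correct. Two details are worth making explicit because they make the verification almost mechanical. First, your ``topmost bag'' assignment has a closed form: $W_{r}=B_{r}\setminus B_{p(r)}$ for every non-root node $r$ with parent $p(r)$, and $W_{r_0}=B_{r_0}$ for the root. From this, $W_r\cup\bigcup_{e\sim r}X_e\subseteq B_r$ is immediate, giving the width bound. Second, normality holds in a strong form: no arc of $G$ joins $S=\bigcup\{W_r\mid v\leq r\}$ to $V\setminus(S\cup X_{(u,v)})$ in either direction, because a vertex of $S$ occurs only in bags inside the subtree below $v$, a vertex outside $S\cup X_{(u,v)}$ occurs only in bags outside that subtree (any straddling vertex lies in $B_u\cap B_v$ by connectivity of its bag-subtree), and the two endpoints of an arc must share a bag. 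Hence a walk in $G-X_{(u,v)}$ starting in $S$ can never leave $S$. Your phrasing that a crossing arc has ``both endpoints in a boundary bag and therefore in $X_{(u,v)}$'' is slightly off---the endpoint inside $S$ never lies in $X_{(u,v)}$---but the shared-bag contradiction is exactly the right mechanism. Also, the translation worry at the end is a non-issue: by the above, $S\cap B_u=\emptyset$, so $S$ and $X_{(u,v)}$ are automatically disjoint and the two conventions for $Z$-normality coincide here.

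The one step you should not leave as ``routine cleanup'' is the removal of empty parts $W_r$, because done carelessly it can break the width bound. The safe fix uses the formula above: $W_r=\emptyset$ if and only if $B_r\subseteq B_{p(r)}$. So either preprocess the undirected decomposition to be nonredundant (repeatedly contract a tree edge whose one bag is contained in the other; this never increases width), after which every $W_r$ is nonempty automatically, or contract each empty node $r$ into its \emph{parent} $p$, reattaching $r$'s children $c$ with the old sets $X=B_r\cap B_c$. This preserves normality (the vertex set below each child and its guarding set are unchanged, and normality does not depend on the tree above) and preserves width precisely because $B_r\cap B_c\subseteq B_r\subseteq B_p$. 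By contrast, contracting an empty node into a \emph{child} can inflate the bag at the absorbing node, and redefining the connecting sets as intersections of the new endpoint bags can destroy normality. With that one detail pinned down (plus the trivial case of an empty root bag, whose incident $X$ sets are all empty), your proof is complete.
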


\begin{lemma}[\cite{JRST01}]\label{le-tw-c-bi}
Let $G$ be some  complete bioriented digraph, then $\dtw(G)= \tws(\un(G))$.
\end{lemma}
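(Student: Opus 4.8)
The plan is to prove the two inequalities separately. Since $G$ is in particular a digraph, the bound $\dtw(G)\le \tws(\un(G))$ is already provided by Lemma \ref{le-tw-d-ud}, so the whole work lies in establishing the reverse inequality $\tws(\un(G))\le \dtw(G)$ for the complete bioriented case $G=\overleftrightarrow{G_u}$ with $G_u=\un(G)$.

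For the reverse direction I would first record the key structural observation that in a complete bioriented digraph directed and undirected walks coincide: a directed walk in $G-Z$ uses exactly the edges available in $G_u-Z$. Consequently a set $S$ is $Z$-normal in the sense of Definition \ref{D3} if and only if $S\setminus Z$ is a union of connected components of $G_u-Z$; equivalently, $Z$ is an undirected separator between $S\setminus Z$ and $V\setminus(S\cup Z)$ in $G_u$. I would then take an optimal arboreal tree-decomposition $(T,\mathcal{X},\mathcal{W})$ of $G$ of width $k=\dtw(G)$, let $T'$ be the undirected tree underlying the out-tree $T$, and define bags $B_r=W_r\cup\bigcup_{e\sim r}X_e$ for $r\in V_T$. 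The width of the resulting pair $(T',\{B_r\})$ is $\max_r|B_r|-1=k$, so it remains to check that it is a genuine undirected tree-decomposition of $G_u$. Covering of $V$ is immediate because $\mathcal{W}$ is a partition. For the edge condition I would use that, by (dtw-2) and the correspondence above, each arc $e=(u,v)$ of $T$ makes $X_e$ an undirected separator between the descendant vertices $W_{\ge v}\setminus X_e$ and the remaining vertices of $G_u$; a standard walk along the $T'$-path joining the home bags of the two endpoints of a given edge of $G_u$ then produces a bag containing both endpoints.

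The main obstacle is the connectivity axiom, i.e.\ that for every vertex $x$ the set of bags containing $x$ induces a subtree of $T'$. This is exactly where the arboreal framework is more permissive than the undirected one: nothing in Definition \ref{D3} prevents a vertex whose unique home $W_{r_x}$ lies in one part of $T$ from appearing in an edge-set $X_e$ located in a far part, which would break connectivity. To overcome this I plan to first normalise the decomposition, replacing each $X_e$ by a minimal subset that still separates the descendant vertices from the remaining vertices in $G_u$; by the correspondence this preserves (dtw-2), it cannot increase any bag and hence keeps the width at $k$, and it forces every vertex kept in an $X_e$ to be genuinely needed for that separation. After this cleanup the separator correspondence yields that whenever $x$ lies in two bags on opposite sides of an arc $e$ with $x\notin X_e$ one reaches a contradiction with the separation property, so all occurrences of $x$ lie on one side of any arc avoiding it; this convexity is precisely the connectivity axiom. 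Verifying that the minimalisation can be carried out simultaneously for all arcs while keeping $\mathcal{W}$ fixed and (dtw-2) intact is the delicate point I expect to require the most care.
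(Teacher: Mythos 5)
Your easy direction is fine, and so is your key structural observation: for $G=\overleftrightarrow{G_u}$ a set $S$ is $Z$-normal exactly when no edge of $G_u$ joins $S\setminus Z$ to $V\setminus(S\cup Z)$. The covering and edge conditions for the bags $B_r=W_r\cup\bigcup_{e\sim r}X_e$ also hold (a zig-zag argument along the tree-path between the home nodes of the two endpoints of an edge works). Moreover, the point you single out as delicate is actually harmless: condition (dtw-2) for an arc $e$ involves only $X_e$ and the fixed partition $\mathcal{W}$, so the guards of different arcs can be shrunk to inclusion-minimal ones independently and simultaneously. The genuine gap is the claim that after this minimalisation the occurrences of each vertex in the bags are convex in the tree; that claim is false. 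Take $G_u$ with vertices $a,x,b,c,d$ and edges $\{a,x\},\{x,b\},\{c,d\}$, let $G=\overleftrightarrow{G_u}$, let $T$ be the directed path $r_1\to r_2\to r_3\to r_4\to r_5$ with $W_{r_1}=\{a\}$, $W_{r_2}=\{c\}$, $W_{r_3}=\{d\}$, $W_{r_4}=\{x\}$, $W_{r_5}=\{b\}$, and take guards $X_{(r_1,r_2)}=\{x\}$, $X_{(r_2,r_3)}=\{a,d\}$, $X_{(r_3,r_4)}=\{a\}$, $X_{(r_4,r_5)}=\{x\}$. All four normality conditions hold, and every guard is inclusion-minimal: $x$ is forced at the two outer arcs by the edges $\{a,x\}$ and $\{x,b\}$, while $a$ and $d$ are forced at the two middle arcs by $\{a,x\}$ and $\{c,d\}$. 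Nevertheless $x$ lies in $B_{r_1},B_{r_2},B_{r_4},B_{r_5}$ but $x\notin B_{r_3}=\{a,d\}$, so connectivity fails. Your claimed contradiction (two occurrences of $x$ on opposite sides of an arc $e$ with $x\notin X_e$) never materialises, because at the middle arcs the crossing edge $\{a,x\}$ is guarded by $a$ rather than by $x$: minimality constrains each guard separately, but it cannot coordinate the choices made at different arcs, and exactly this lack of coordination is what breaks connectivity. Optimality of the width does not rescue the argument either: adding a disjoint bioriented $K_4$, decomposed with minimal guards on a path preceding $r_1$, gives a width-$3$ decomposition that is optimal for the enlarged digraph and still violates connectivity.

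This is also why the paper offers no proof of this lemma and cites \cite{JRST01} instead. The argument there takes a genuinely different, non-constructive route: by the Seymour--Thomas min-max theorem, $\tws(\un(G))\geq k$ implies that $\un(G)$ has a haven of order $k+1$; such a haven is automatically a haven of the digraph $G$, because for every vertex set $Z$ the strong components of $G-Z$ coincide with the connected components of $\un(G)-Z$; and \cite{JRST01} show that a digraph with a haven of order $w$ admits no arboreal decomposition of width less than $w-1$. Thus the hard inequality $\tws(\un(G))\leq\dtws(G)$ is certified by obstructions (havens), not by translating an optimal arboreal decomposition into an undirected tree-decomposition. To keep your direct route you would need a mechanism that globally re-coordinates the guard sets $X_e$ across all arcs of $T$; no per-arc normalisation of the kind you propose can supply this.
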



Determining whether the (undirected) tree-width of some given (undirected) graph  is 
at most some given value $w$ is NP-complete 
even for bipartite graphs and complements
of bipartite graphs \cite{ACP87}.
Lemma \ref{le-tw-c-bi} implies
that determining whether the directed tree-width of some given digraph  is 
at most some given value $w$ is NP-complete even for digraphs whose underlying 
graphs lie in the mentioned classes.

The results of \cite{JRST01} lead to an XP-algorithm
for directed tree-width w.r.t.~the standard parameter
which implies that for each constant $w$, it is decidable in polynomial time whether a given
digraph has directed tree-width at most $w$.

In order to show our main results we show some properties
of directed tree-decompositions.

\begin{lemma}[\cite{JRST01}]\label{le-tw-subdigraph2}
Let $G$ be some digraph and $H$ be an induced subdigraph
of $G$, then $\dtws(H)\leq \dtws(G)$.
\end{lemma}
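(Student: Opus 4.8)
The plan is to start from an optimal arboreal tree-decomposition of $G$, restrict every bag and every arc label to the vertex set of $H$, and then repair the single condition that such a restriction can destroy, namely the non-emptiness demanded by (dtw-1). Concretely, write $H=G[V_H]$ for some $V_H\subseteq V_G$ and fix an arboreal tree-decomposition $(T,\mathcal{X},\mathcal{W})$ of $G$ of width $\dtws(G)$ as in Definition \ref{D3}. I would set $W'_r=W_r\cap V_H$ for every $r\in V_T$ and $X'_e=X_e\cap V_H$ for every $e\in E_T$. Since each set only shrinks, $|W'_r\cup\bigcup_{e\sim r}X'_e|\le |W_r\cup\bigcup_{e\sim r}X_e|$ for every node $r$, so the width of the restricted system is at most $\dtws(G)$.

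The core verification is that (dtw-2) survives the restriction, and this is where the hypothesis that $H$ is an \emph{induced} subdigraph is used. For $(u,v)\in E_T$ the set $\Omega=\bigcup\{W_r\mid v\le r\}$ is $X_{(u,v)}$-normal in $G$, and its restriction $\Omega\cap V_H=\bigcup\{W'_r\mid v\le r\}$ is $(X_{(u,v)}\cap V_H)$-normal in $H$. Indeed, because $H$ is induced, every directed walk in $H$ is a directed walk in $G$ lying entirely in $V_H$; a walk witnessing a violation in $H$ would avoid $X_{(u,v)}\cap V_H$ and, being contained in $V_H$, therefore avoid $X_{(u,v)}$, have both endpoints in $\Omega$, and pass through a vertex of $V_H\setminus(\Omega\cup X_{(u,v)})\subseteq V_G\setminus(\Omega\cup X_{(u,v)})$, contradicting the $X_{(u,v)}$-normality of $\Omega$ in $G$. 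This step is clean and is essentially forced by the definition of $Z$-normality.

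The one obstruction is that (dtw-1) requires each part to be non-empty, whereas $W'_r=\emptyset$ precisely when $W_r\subseteq V_G\setminus V_H$. The repair is to \emph{suppress} every node carrying an empty part: for such a node $r_0$ with parent $p$ and children $c_1,\dots,c_m$, delete $r_0$ and attach each $c_i$ directly to $p$, transporting the labels so that the (unchanged) descendant set $\bigcup\{W'_t\mid c_i\le t\}$ is still certified by the label on the new arc into $c_i$ — the natural candidate being $X_{(r_0,c_i)}\cap V_H$, which is exactly the set already shown to make that descendant set normal. It is convenient to organize the whole argument as a single-vertex deletion followed by an iteration, so that at each step at most one part $W_{r_0}=\{v\}$ turns empty, and to process the emptied nodes along each root-to-leaf chain.

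I expect the main obstacle to be checking that suppression does not increase the width. Re-rooting the children of $r_0$ at $p$ makes labels that were incident to $r_0$ incident to $p$, so a priori the budget $|W'_p\cup\bigcup_{e\sim p}X'_e|$ could grow. The argument must exploit that at the emptied node the entire family of incident labels is already small, since $|\bigcup_{e\sim r_0}X'_e|\le\dtws(G)+1$ because $W'_{r_0}=\emptyset$, combined with a judicious choice of which endpoint inherits each transported label so that no retained node accumulates labels from both sides of the suppressed node. Making this local accounting go through — and iterating it safely down an entire chain of successively emptied nodes without the budget ever exceeding $\dtws(G)+1$ — is the delicate point; once it is established, the restricted and suppressed triple is an arboreal tree-decomposition of $H$ of width at most $\dtws(G)$, and everything else is routine bookkeeping.
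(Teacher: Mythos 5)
Your restriction step is sound: intersecting every part and every arc label with $V_H$ cannot increase the width, and your verification that (dtw-2) survives (every directed walk in an induced subdigraph is a directed walk of $G$ lying inside $V_H$) is exactly right. The gap is that the repair of (dtw-1) --- suppressing the nodes whose part becomes empty --- is never actually carried out; you acknowledge the width accounting as ``the delicate point'' and assert that it ``must'' go through, but it does not go through by the surgery you describe, and it is not routine bookkeeping: it is the entire content of the lemma under the paper's definition, which insists that all sets $W_r$ be nonempty. Concretely, in Definition \ref{D3} the label $X_e$ of an arc $e$ is charged to \emph{both} of its endpoints ($e\sim r$ means $r$ is an endpoint of $e$), so there is no ``judicious choice of which endpoint inherits each transported label'': once you delete $r_0$ and re-attach its children $c_1,\ldots,c_m$ to the parent $p$ via arcs $(p,c_i)$ labelled $X'_{(r_0,c_i)}$, the budget of $p$ becomes its old labels together with $\bigcup_i X'_{(r_0,c_i)}$, and the only bounds available --- $|W'_p\cup\bigcup_{e\sim p}X'_e|\le \dtws(G)+1$ and $|\bigcup_{e\sim r_0}X'_e|\le \dtws(G)+1$ --- yield only about $2\dtws(G)+2$. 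The natural repairs fail as well: putting the old parent label $X'_{(p,r_0)}$ on the new arcs instead is unsound, because $X'_{(p,r_0)}$-normality of the union of all subtree sets below $r_0$ does not imply normality of each subtree set separately, and enlarging the new labels to $X'_{(r_0,c_i)}\cup X'_{(p,r_0)}$ inflates the width at the children.

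This is not a defect you could have polished away locally; it is the known subtlety of the nonempty-parts condition. The paper does not prove this lemma at all --- it cites it from \cite{JRST01} --- and the very difficulty you ran into is what the addendum \cite{JRST01a} resolves by redefining directed tree-decompositions to allow empty sets $W_r$, a relaxation the paper itself mentions in its conclusion. Under that relaxed definition your first two steps already constitute a complete proof, since no suppression is needed. To prove the lemma in the form stated here, you would additionally have to show that empty parts can always be eliminated without increasing the width (i.e., that the relaxed and the strict definitions give the same parameter), and that is precisely the step your proposal leaves open.
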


\begin{lemma}[Bidirectional complete subdigraph]\label{cl1tw}
Let $(T, \mathcal{X}, \mathcal{W})$, $T=(V_T,E_T)$, where $r_T$ is the root of $T$, 
be a directed tree-decomposition of some digraph $G=(V,E)$ and 
$G'=(V',E')$ with $V'\subseteq V$ be a bidirectional complete subdigraph.
Then $V'\subseteq W_{r_T}$ 
or there is some $(r,s)\in E_T$, such that $V'\subseteq W_{s}\cup X_{(r,s)}$.
\end{lemma}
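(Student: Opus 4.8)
The plan is to prove the directed analogue of the bidirectional complete subdigraph lemma by mimicking the structure of the path-width version (Lemma~\ref{cl1}), but replacing the linear bag order with the tree structure of the arboreal decomposition. The key insight I would exploit is that the bags $W_s \cup X_{(r,s)}$ attached to each arc play the role of the ``interval'' bags $X_i$, and that the $Z$-normality condition (dtw-2) forces a bidirectional complete subdigraph to be confined to a single such bag, because any two vertices of $V'$ are joined by arcs in \emph{both} directions and hence can witness a directed walk that escapes and re-enters any proper part of $V'$.

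Here is the approach concretely. First I would argue by contradiction: suppose $V'$ is not contained in $W_{r_T}$ and is not contained in any $W_s \cup X_{(r,s)}$. Since $\mathcal{W}$ partitions $V_G$, each vertex of $V'$ lies in exactly one $W_r$; consider the set $R = \{ r \in V_T \mid W_r \cap V' \neq \emptyset \}$ of nodes whose bag meets $V'$. If $|R|=1$, then all of $V'$ lies in a single $W_r$, and I would handle this boundary case directly (either $r=r_T$, giving $V'\subseteq W_{r_T}$, or $r$ has an incoming arc $(r',r)$ with $W_r \subseteq W_r \cup X_{(r',r)}$). So the substantive case is $|R|\ge 2$. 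I would then pick an arc $(u,v)\in E_T$ that \emph{separates} two nodes of $R$ in $T$, i.e.\ an arc such that $V'$ has vertices both in $\bigcup\{W_r \mid v \le r\}$ and outside it. The set $S = \bigcup\{W_r \mid v \le r\}$ is $X_{(u,v)}$-normal by (dtw-2).

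The crux is then to show this separation is incompatible with $V'$ being bidirectional complete unless the separating arc's bag $X_{(u,v)}$ already absorbs all of $V'$. Take $a \in V' \cap S$ and $b \in V' \setminus S$. Because $V'$ is bidirectional complete, both $(a,b)$ and $(b,a)$ are arcs of $G$, so the two-arc walk $a \to b \to a$ is a directed walk that leaves and re-enters $S$ (it uses the vertex $b \notin S$, and $b \notin Z \cup S$ unless $b \in X_{(u,v)}$). By $X_{(u,v)}$-normality of $S$, such a walk must contain a vertex of $Z = X_{(u,v)}$; since $a \in S$, the offending vertex is $b$, forcing $b \in X_{(u,v)}$. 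Running this for every $b \in V'\setminus S$ shows $V'\setminus S \subseteq X_{(u,v)}$; a symmetric argument applied to walks leaving and re-entering the complement side (or a second application using that the complement of $S$ is likewise controlled through the arc) should pin the remaining vertices of $V'\cap S$ into $W_v \cup X_{(u,v)}$, yielding $V'\subseteq W_v \cup X_{(u,v)}$ and contradicting our assumption.

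The main obstacle I anticipate is making the ``every vertex on the $S$-side also lands in $W_v\cup X_{(u,v)}$'' step clean, since $S$ is a union of many bags spread over a subtree and $V'\cap S$ could a priori scatter across several nodes below $v$. The honest resolution is to choose the arc $(u,v)$ as deep as possible---more precisely, to take $v$ to be a node of $R$ that is \emph{minimal} under $\le$ among those nodes $r\in R$ for which not all of $V'$ lies at or below $r$, or to iterate the normality argument along the unique tree-path between two $R$-nodes so that each separating arc forces one more vertex of $V'$ into its bag. I expect that a careful choice collapses $V'$ into a single $W_v\cup X_{(u,v)}$, but verifying that the normality constraints along the whole path are mutually consistent (rather than each forcing a different bag) is where the real bookkeeping lives, and I would want to state the minimality of $v$ explicitly before invoking (dtw-2).
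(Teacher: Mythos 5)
Your core mechanism is the right one, and it is the same one the paper uses: for an arc $(u,v)\in E_T$ with $S=\bigcup\{W_r \mid v\le r\}$, a vertex $a\in V'\cap S$ and a vertex $b\in V'\setminus S$ give the directed walk $a\to b\to a$, and (dtw-2) forces $b\in X_{(u,v)}$. However, there is a genuine gap in your choice of the arc, and neither of the two ways you propose to close it works. First, the ``symmetric argument applied to the complement side'' is unavailable: (dtw-2) makes only the down-sets $\bigcup\{W_r \mid v\le r\}$ normal, never their complements, so nothing in the definition controls walks that leave and re-enter $V\setminus S$. Second, your explicit choice of $v$ --- \emph{minimal} under $\le$ (i.e.\ closest to the root, which contradicts your own phrase ``as deep as possible'') among the nodes of $R$ not having all of $V'$ at or below them --- selects an arc for which the conclusion can simply fail. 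Example: let $G$ consist of a bidirectional complete triangle on $\{x,y,z\}$ plus an isolated vertex $w$, let $T$ be the directed path $r_T\to a\to b\to c$ with $W_{r_T}=\{x\}$, $W_a=\{w\}$, $W_b=\{y\}$, $W_c=\{z\}$, and $X_{(r_T,a)}=X_{(a,b)}=\{x\}$, $X_{(b,c)}=\{x,y\}$; this satisfies (dtw-1) and (dtw-2). With $V'=\{x,y,z\}$ your rule selects $v=b$, but $V'\not\subseteq W_b\cup X_{(a,b)}=\{x,y\}$; the lemma's conclusion holds only at the deeper arc $(b,c)$.

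The fix --- and the paper's actual argument --- is to make the deepest choice outright: pick $s\in V_T$ with $W_s\cap V'\neq\emptyset$ such that $W_{s'}\cap V'=\emptyset$ for every $s'$ with $s<s'$; such an $s$ exists by searching upward from the leaves, using (dtw-1). Then $V'$ meets the subtree below $s$ only inside $W_s$, so the scattering problem you worried about disappears by construction: if $s=r_T$ then (dtw-1) gives $V'\subseteq W_{r_T}$, and otherwise, taking the parent arc $(r,s)$, every $c\in V'\setminus W_s$ lies entirely outside $S=\bigcup\{W_{r'}\mid s\le r'\}$, so the walk $c'\to c\to c'$ (with $c'\in V'\cap W_s$) forces $c\in X_{(r,s)}$ by (dtw-2), giving $V'\subseteq W_s\cup X_{(r,s)}$ in a single application of normality. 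No iteration along tree-paths and no consistency bookkeeping between different arcs is needed; that difficulty is an artifact of choosing the arc too close to the root.
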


\begin{proof}
First we show the existence of a vertex $s$ in $V_T$,
such that $W_s\cap V'\neq \emptyset$ but for every vertex $s'$ such that $s<s'$
 holds  $W_{s'}\cap V'= \emptyset$. If there is a leaf $\ell$ in $T$, such that 
$W_{\ell}\cap V' \neq \emptyset$, we can choose $s=\ell$. Otherwise
we look for vertex $s$ among the predecessors of the leaves in $T$, and so on.
Since $V'\subseteq V=\cup_{r\in V_T} W_r$ we will find a vertex $s$ with the 
stated properties.

Next we show that $W_s$ leads to a set which shows the statement of the lemma.
If $s$ is the root of $T$, then $W_{s'}\cap V'\neq \emptyset$ for none of its 
successors $s'$ in $T$ i.e. $W_{s'}\cap V'=\emptyset$ for all of its successors 
$s'$ in $T$, which implies by (dtw-1) that $V'\subseteq W_s$.
Otherwise let $r$ be the predecessor of $s$ in $T$. If $V'\subseteq W_s$
the statement is true. Otherwise let $c\in V'-W_s$ and $c'\in V'\cap W_s$.
Then $(c,c')\in E$ and $(c',c)\in E$ implies that $c\in X_{(r,s)}$
by (dtw-2).
\end{proof}

\begin{lemma}\label{cbsltw} 
Let $G=(V,E)$ be some
digraph,  $(T, \mathcal{X}, \mathcal{W})$, $T=(V_T,E_T)$, where $r_T$ is the root of $T$,  
be a directed tree-decomposition of $G$.
Further let $A,B\subseteq V$, $A\cap B=\emptyset$, 
and $\{(u,v),(v,u)~|~u\in A, v\in B\}\subseteq E$.
Then $A\cup B\subseteq W_{r_T}$ 
or there is some $(r,s)\in E_T$, such that  $A\subseteq W_{s}\cup X_{(r,s)}$
or $B\subseteq W_{s}\cup X_{(r,s)}$.
\end{lemma}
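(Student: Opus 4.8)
The plan is to mimic the structure of the proof of Lemma \ref{cl1tw}, but now tracking two disjoint sets $A$ and $B$ that are joined bidirectionally. First I would identify a ``deepest'' node in the out-tree $T$ that still sees either of the two sets. Concretely, let $S\subseteq V_T$ be the set of nodes $t$ with $W_t\cap(A\cup B)\neq\emptyset$; since $A\cup B\subseteq V=\bigcup_{r\in V_T}W_r$ and $\mathcal{W}$ is a partition, $S$ is nonempty. Choose a node $s\in S$ that is maximal with respect to the order $\leq$ on $T$, i.e.\ $W_{s'}\cap(A\cup B)=\emptyset$ for every $s'$ with $s<s'$. Exactly as in Lemma \ref{cl1tw}, such a maximal node exists (walk up from the leaves).

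Next I would split into the two cases used before. If $s$ is the root $r_T$, then no proper successor of $s$ meets $A\cup B$, so by (dtw-1) every vertex of $A\cup B$ lies in $W_{r_T}$, giving $A\cup B\subseteq W_{r_T}$. Otherwise let $r$ be the predecessor of $s$, so $(r,s)\in E_T$, and I claim $A\subseteq W_s\cup X_{(r,s)}$ or $B\subseteq W_s\cup X_{(r,s)}$. The key observation is that any vertex $c\in(A\cup B)$ that is \emph{not} in $W_s$ must actually lie in $X_{(r,s)}$: by maximality of $s$, $c$ cannot lie in any $W_{s'}$ with $s<s'$, so $c\in W_t$ for some $t$ outside the subtree rooted at $s$; then any bidirectional pair of arcs between $c$ and a vertex $c'\in(A\cup B)\cap W_s$ (which exists because $s\in S$) forces $c\in X_{(r,s)}$ via (dtw-2), since the set $U=\bigcup\{W_q\mid s\leq q\}$ must be $X_{(r,s)}$-normal and the closed walk $c\to c'\to c$ both leaves and re-enters $U$. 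Thus every vertex of $A\cup B$ lies in $W_s\cup X_{(r,s)}$.

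The remaining, and really the only genuinely new, step is to upgrade ``$A\cup B\subseteq W_s\cup X_{(r,s)}$'' to ``$A\subseteq W_s\cup X_{(r,s)}$ \emph{or} $B\subseteq W_s\cup X_{(r,s)}$''. The containment $A\cup B\subseteq W_s\cup X_{(r,s)}$ is trivially both, so in fact the disjunction follows immediately here; the subtlety is only that I must ensure the witness $c'\in(A\cup B)\cap W_s$ interacts with $c$ through a bidirectional pair, which it does regardless of whether $c,c'$ sit on the same side of the $A/B$ partition or on opposite sides, because the hypothesis $\{(u,v),(v,u)\mid u\in A,v\in B\}\subseteq E$ only guarantees bidirectional arcs across sides. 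Hence I must check that when $c$ and $c'$ lie on the same side (both in $A$ or both in $B$) I can still close a walk through $Z=X_{(r,s)}$; here I would route through a vertex on the opposite side, using that the opposite side is nonempty as soon as both $A,B\neq\emptyset$ and using the cross arcs to build the required directed walk from $c$ to $c'$ and back. The main obstacle is precisely this same-side case: reconstructing the normality argument so that an intra-$A$ (or intra-$B$) vertex outside $W_s$ is still certified to lie in $X_{(r,s)}$, by exhibiting a directed walk that leaves and re-enters $U=\bigcup\{W_q\mid s\leq q\}$ and invoking (dtw-2). Once this is handled, the conclusion $A\subseteq W_s\cup X_{(r,s)}$ or $B\subseteq W_s\cup X_{(r,s)}$ follows, completing the proof.
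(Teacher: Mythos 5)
Your setup (the choice of a $\leq$-maximal node $s$ with $W_s\cap(A\cup B)\neq\emptyset$, the root case via (dtw-1), and the reduction to the edge $(r,s)$) matches the paper's proof, but the core of your non-root case has a genuine gap: you aim to prove the stronger statement $A\cup B\subseteq W_s\cup X_{(r,s)}$, and that statement is false in general. You correctly diagnose the obstacle (two vertices on the same side of the partition need not be joined by any arc), but the repair you sketch --- routing a walk from $c'\in A\cap W_s$ through some $b\in B$ to reach $c\in A\setminus W_s$ and back --- cannot work: normality of $U=\bigcup\{W_q\mid s\leq q\}$ only constrains walks in $G-X_{(r,s)}$, so if the intermediate vertex $b$ happens to lie in $X_{(r,s)}$, the walk $c'\to b\to c\to b\to c'$ is not a walk in $G-X_{(r,s)}$ and (dtw-2) yields nothing about $c$. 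A concrete counterexample to your intermediate claim: let $G=\overleftrightarrow{P_3}$ on $\{a_1,b,a_2\}$ with $A=\{a_1,a_2\}$, $B=\{b\}$, and take the decomposition with $T$ the path $t_1\to t_2\to t_3$, $W_{t_1}=\{a_1\}$, $W_{t_2}=\{b\}$, $W_{t_3}=\{a_2\}$, $X_{(t_1,t_2)}=\{a_1\}$, $X_{(t_2,t_3)}=\{b\}$. This is a valid directed tree-decomposition (of width $1$); here $s=t_3$, $r=t_2$, and $W_s\cup X_{(r,s)}=\{a_2,b\}$ misses $a_1$. Only $B$ is captured --- which is exactly why the lemma's conclusion is a disjunction rather than the conjunction you are trying to establish.

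The correct step, which is what the paper does, is asymmetric: since $W_s$ meets $A\cup B$, it contains (say) some $a\in A$ with $B\not\subseteq W_s$; then for every $b\in B\setminus W_s$ the two cross arcs give the walk $a\to b\to a$, whose endpoints lie in $U$ (note your walk $c\to c'\to c$ has its endpoints \emph{outside} $U$, so it is the wrong walk to feed to normality), and since $b\notin U$ by maximality of $s$, (dtw-2) forces $b\in X_{(r,s)}$; hence $B\subseteq W_s\cup X_{(r,s)}$. Symmetrically, if $W_s$ contains a vertex of $B$, one concludes $A\subseteq W_s\cup X_{(r,s)}$. No same-side argument is needed, and none is available: the vertices of $A\setminus W_s$ are simply abandoned rather than certified, and accepting that loss is what closes the proof.
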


\begin{proof}
Similar as in the proof of Lemma \ref{cl1tw} we can 
find a vertex $s$ in $V_T$,
such that $W_s\cap (A\cup B)\neq \emptyset$ but for every vertex $s'$ such that $s<s'$
holds  $W_{s'}\cap (A\cup B)= \emptyset$. 

If $s$ is the root of $T$, then $W_{s'}\cap (A\cup B)\neq \emptyset$ for none 
of its successors $s'$ in $T$, i.e. $W_{s'}\cap (A\cup B)=\emptyset$ for all 
of its successors $s'$ in $T$, which implies by (dtw-1) that $A\cup B\subseteq W_s$.

Otherwise let $r$ be the predecessor of $s$ in $T$.
If $A\cup B\subseteq W_s$ the statement is true. Otherwise we know that 
there is some $a\in A$ such that $a\in W_s$ and $B\not\subseteq W_s$ 
or some $b\in B$ such that  $b\in W_s$ and $A\not\subseteq W_s$.
We assume that there is some $a\in A$ such that $a\in W_s$ and $B\not\subseteq W_s$. 
Let $b\in B-W_s$ and $a\in A\cap W_s$.
Then $(a,b)\in E$ and $(b,a)\in E$  implies that $b\in X_{(r,s)}$
by (dtw-2). Thus we have shown $B\subseteq W_{s}\cup X_{(r,s)}$.

If we assume that there some $b\in B$ such that  $b\in W_s$, we conclude 
$A\subseteq W_{s}\cup X_{(r,s)}$. 
\end{proof}

\begin{lemma}\label{w-one}
Let $G$ be a digraph of directed tree-width at most $k$.
Then there is a directed tree-decomposition $(T, \mathcal{X}, \mathcal{W})$, $T=(V_T,E_T)$, 
of width at most $k$ for $G$ such that $|W_r|=1$ for every $r\in V_T$.
\end{lemma}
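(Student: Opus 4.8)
The plan is to start from an arbitrary directed tree-decomposition $(T,\mathcal{X},\mathcal{W})$ of $G$ of width at most $k$, which exists because $\dtws(G)\le k$, and to repeatedly split every node whose bag contains more than one vertex without ever increasing the width. To make this precise I would induct on the quantity $\mu=\sum_{r\in V_T}(|W_r|-1)$, which is $0$ exactly when all bags are singletons. If $\mu>0$ there is a node $r$ with $|W_r|\ge 2$; I fix a vertex $v\in W_r$ and replace $r$ by two new nodes $a,b$ joined by the arc $(a,b)$, setting $W_a=\{v\}$ and $W_b=W_r\setminus\{v\}$. The former parent arc $(p,r)$, if $r$ is not the root, becomes $(p,a)$ and each former child arc $(r,c)$ becomes $(b,c)$, all keeping their guard sets; the only new guard set is $X_{(a,b)}:=X_{(p,r)}\cup\{v\}$, with the convention $X_{(p,r)}=\emptyset$ when $r$ is the root. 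This keeps $T$ an out-tree and replaces the single bag $W_r$ by the two nonempty bags $W_a,W_b$, so (dtw-1) still holds and $\mu$ drops by exactly one.

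The key auxiliary fact I would isolate first is the following monotonicity of normality: if $S\subseteq V$ is $Z$-normal and $U\subseteq S$, then $S\setminus U$ is $(Z\cup U)$-normal. This is immediate from the definition, since $(Z\cup U)\cup(S\setminus U)=Z\cup S$; hence any directed walk witnessing a violation for the pair $(S\setminus U,\,Z\cup U)$ avoids $Z$, starts and ends in $S$, and uses a vertex outside $Z\cup S$, which is precisely a witness against $S$ being $Z$-normal.

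With this in hand the two conditions for the new decomposition are routine. For the width, $W_a\cup\bigcup_{e\sim a}X_e$ equals $\{v\}\cup X_{(p,r)}\subseteq W_r\cup X_{(p,r)}$, while $W_b\cup\bigcup_{e\sim b}X_e$ equals exactly $W_r\cup\bigcup_{e\sim r}X_e$; every other node is untouched, so the width stays at most $k$. For (dtw-2) only the arc $(a,b)$ requires the auxiliary fact: the set below $b$ is $\bigcup\{W_s\mid b\le s\}=(W_r\cup D)\setminus\{v\}$, where $D$ collects the bags of the original proper descendants of $r$, and $W_r\cup D$ is exactly the set below the old arc $(p,r)$, hence $X_{(p,r)}$-normal (when $r$ is the root this set is $V_G$, which is trivially $\emptyset$-normal). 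Applying the auxiliary fact with $U=\{v\}$ shows $(W_r\cup D)\setminus\{v\}$ is $(X_{(p,r)}\cup\{v\})=X_{(a,b)}$-normal, as needed; the arcs $(p,a)$ and $(b,c)$ inherit normality unchanged because the vertex sets below them coincide with those below $(p,r)$ and $(r,c)$.

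The main obstacle is really just verifying (dtw-2) for the newly created internal arc $(a,b)$, and the normality-monotonicity fact is exactly what dissolves it; everything else is bookkeeping about which arcs change together with a one-line comparison of bag sizes. Iterating the split until $\mu=0$ then yields a directed tree-decomposition of width at most $k$ with $|W_r|=1$ for every $r\in V_T$, proving the lemma.
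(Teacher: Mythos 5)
Your proof is correct and takes essentially the same approach as the paper: the paper replaces a bag $W_r=\{v_1,\ldots,v_k\}$ in a single step by a directed path of $k$ singleton bags with guard sets $X_{(p,r)}\cup\{v_1,\ldots,v_j\}$, which is exactly what iterating your two-node split on the same bag produces. Your isolated monotonicity fact (if $S$ is $Z$-normal and $U\subseteq S$, then $S\setminus U$ is $(Z\cup U)$-normal) is precisely the observation the paper uses implicitly to verify (dtw-2), and you additionally make the root case explicit, which the paper glosses over.
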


\begin{proof}
Let  $G=(V,E)$ be a digraph and  $(T, \mathcal{X}, \mathcal{W})$, $T=(V_T,E_T)$, 
be a directed tree-decomposition of $G$. Let $r\in V_T$ such that $W_r=\{v_1,\ldots,v_k\}$
for some $k>1$. Further let $p$ be the predecessor of $r$ in $T$ and $s_1,\ldots,s_\ell$
be the successors of  $r$ in $T$.
Let $(T', \mathcal{X}', \mathcal{W}')$ be defined by the following modifications of $(T, \mathcal{X}, \mathcal{W})$:
We replace vertex $r$ in $T$ by the directed path
$P(r)=(\{r_1,\ldots,r_k\},\{(r_1,r_2),\ldots,(r_{k-1},r_k)\})$
and replace arc $(p,r)$ by $(p,r_1)$ and the $\ell$ arcs $(r,s_j)$, $1\leq j\leq \ell$,
by the $\ell$ arcs $(r_k,s_j)$, $1\leq j\leq \ell$ in $T'$.
We define the sets $W'_{r_j}=\{v_{j}\}$ for $1\leq j \leq k$. Further we 
define the sets 
$X'_{(p,r_1)}=X_{(p,r)}$, $X_{(r_k,s_j)}=X_{(r,s_j)}$, $1\leq j\leq \ell$,
and $X'_{(r_j,r_{j+1})}=X_{(p,r)}\cup\{r_1,\ldots,r_j\}$,  $1\leq j\leq k-1$.

By our definition $\mathcal{W}'$ leads to a
partition of $V$ into nonempty subsets. 
Further for every new arc $(r_{i-1},r_i)$, $1 < i \leq k$,
the set $\bigcup\{W'_{r'} ~|~ r'\in V_{T'}, r_i\leq r'\}$ is 
$X'_{(r_{i-1},r_i)}$-normal 
since $\bigcup\{W_{r'} ~|~ r'\in V_T, r\leq r'\}$ is 
$X_{(p,r)}$-normal and $X'_{(r_{i-1},r_i)}=X_{(p,r)}\cup \{r_1,\ldots,r_{i-1}\}$. 
The property is fulfilled for arc $(p,r_1)$ and $(v_k,s_j)$, $1\leq j \leq \ell$
since the considered vertex sets of $G$ did not change.
Thus triple $(T', \mathcal{X}', \mathcal{W}')$ is
a directed tree-decomposition of $G$.

The width of  $(T', \mathcal{X}', \mathcal{W}')$
is at most the width of $(T, \mathcal{X}, \mathcal{W})$ since 
for every $r_j$, $1\leq j\leq k$, the following holds:   
$|W'_{r_j}\cup \bigcup_{e \sim r_j} X'_e|\leq |W_{r}\cup \bigcup_{e \sim r} X_e|$.

If we perform this transformation for every $r\in V_T$ such that $|W_r|>1$, 
we obtain a  directed tree-decomposition of $G$ which fulfills the
properties of the lemma.
\end{proof}

\begin{lemma}\label{le-dist}
Let $G=(V,E)$ be a digraph of directed tree-width at most $k$,
such that $V_1\cup V_2=V$, $V_1\cap V_2=\emptyset$, and $\{(u,v),(v,u)~|~u\in V_1, v\in V_2\}\subseteq E$.
Then there is a directed tree-decomposition $(T, \mathcal{X}, \mathcal{W})$, $T=(V_T,E_T)$, 
of width at most $k$ for $G$ such that for every $e\in E_T$ holds $V_1\subseteq X_e$
or for every $e\in E_T$ holds $V_2\subseteq X_e$.
\end{lemma}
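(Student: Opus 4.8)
The plan is to take an arbitrary directed tree-decomposition $(T, \mathcal{X}, \mathcal{W})$ of $G$ of width at most $k$ and massage it into one with the claimed ``one side is contained in every bag'' property. First I would invoke Lemma \ref{w-one} to assume without loss of generality that every $W_r$ is a singleton; this is convenient because it makes the partition $\mathcal{W}$ essentially a bijection between $V_T$ and $V$, and it lets me reason about individual vertices sitting at individual nodes of the out-tree. With the singleton normalization in hand, I would apply Lemma \ref{cbsltw} with $A=V_1$ and $B=V_2$: this tells me that either $V_1\cup V_2=V\subseteq W_{r_T}$ (a degenerate case where $|V|=1$ or the tree is trivial, which I will handle separately and easily), or there is an arc $(r,s)\in E_T$ with $V_1\subseteq W_s\cup X_{(r,s)}$ or $V_2\subseteq W_s\cup X_{(r,s)}$. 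Say, after possibly swapping the roles of $V_1$ and $V_2$, that $V_1\subseteq W_s\cup X_{(r,s)}$.

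The key structural idea is to reroot or restructure the tree so that all of $V_1$ gets pushed up into the guards $X_e$ along a single path toward the root, thereby appearing in every relevant bag. The natural approach is to build a new out-tree whose ``spine'' is the node $s$ (or the path from the root to $s$), and to hang the rest of the decomposition off it in such a way that $V_1$ is forced into every $X_e$. Concretely, I would consider the following transformation: take the existing node $s$, and for the new tree $T'$ start its root at $s$, making $s$ the top of a path and attaching the remaining structure of $T$ below. Since $V_1\subseteq W_s\cup X_{(r,s)}$ and $|W_s|=1$, almost all of $V_1$ already lives in the guard $X_{(r,s)}$; the plan is to redefine the guards on all arcs of the new tree to be the old guard augmented by $V_1$, i.e.\ set $X'_e = X_e\cup V_1$ for each arc $e$ of $T'$. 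I must then verify the two directed tree-decomposition axioms (dtw-1) and (dtw-2) for $(T',\mathcal{X}',\mathcal{W}')$ and, crucially, check that the width does not increase beyond $k$.

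The width bound is where the completeness between $V_1$ and $V_2$ does its work, and I expect this to be the main obstacle. Adding $V_1$ to every guard could blow up $|W_r\cup\bigcup_{e\sim r}X_e|$ unless I can argue that $V_1$ was essentially already paid for wherever it now appears. The idea I would pursue is that because every vertex of $V_1$ is bidirectionally adjacent to every vertex of $V_2$, Lemma \ref{cl1tw} (the bidirectional complete subdigraph lemma) applied to the set $\{v\}\cup V_2$ for any $v\in V_2$'s neighborhood forces strong co-occurrence constraints; more to the point, I expect that the node $s$ together with the guard $X_{(r,s)}$ already contains $V_1$, and that the subtree below $s$ in $T$ is precisely where $V_2$ (and its private structure) lives, so the nodes where I am newly inserting $V_1$ are exactly those whose old bags were ``small'' relative to the $|V_1|$-contribution. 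The cleanest way to control this is to show $X'_e = X_e \cup V_1$ satisfies $|W'_r \cup \bigcup_{e\sim r} X'_e| \le k+1$ by checking that on the $V_2$-side the old bags already had room, and on the $V_1$-side $V_1$ was already present, thereby bounding each new bag by an old bag from the complete-bipartite structure.

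Finally, I would verify the normality condition (dtw-2): for each arc $(u,v)$ of $T'$ the set $\bigcup\{W'_r \mid v\le r\}$ must be $X'_{(u,v)}$-normal. Here adding $V_1$ to every guard only helps, since enlarging the guard $Z$ can only make more walks permissible and hence can only make a set $Z$-normal that was already normal for a smaller $Z$; monotonicity of $Z$-normality in $Z$ should give (dtw-2) for free once I confirm the partition $\mathcal{W}'=\mathcal{W}$ (restructured) is unchanged as a partition of $V$. The symmetric case $V_2\subseteq W_s\cup X_{(r,s)}$ is handled identically with the roles of $V_1$ and $V_2$ exchanged, yielding the ``or'' in the statement, and the degenerate root case $V\subseteq W_{r_T}$ is trivial since then we may simply place all of $V_1$ (or $V_2$) into the single guard of any incident arc. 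I expect the bookkeeping for the width bound to be the crux, and the normality and partition axioms to follow routinely from monotonicity and the fact that the underlying partition of $V$ is preserved.
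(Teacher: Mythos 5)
You correctly flag the width bound as the crux, but you never establish it, and the transformation you actually describe --- take the side delivered by Lemma \ref{cbsltw} and set $X'_e=X_e\cup V_1$ for \emph{every} arc $e$ --- is not just unproven, it is false. Concretely, let $V_1=\{a_1,a_2,a_3\}$ and $V_2=\{b_1,b_2,b_3\}$ both induce edgeless digraphs and let $G=G[V_1]\otimes G[V_2]$ (a bioriented $K_{3,3}$, so $\dtws(G)=3$; take $k=4$). Consider the out-tree with root $p_1$, children $x$ and $z$, and the path $x\to y_1\to y_2\to y_3$ below $x$, with $W_{p_1}=\{a_2\}$, $W_x=\{a_1\}$, $W_z=\{a_3\}$, $W_{y_i}=\{b_i\}$, and guards $X_{(p_1,z)}=V_2$, $X_{(p_1,x)}=\{a_2,a_3\}$, $X_{(x,y_1)}=X_{(y_1,y_2)}=X_{(y_2,y_3)}=V_1$. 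Condition (dtw-1) is clear, and (dtw-2) holds: below $z$ the set is $\{a_3\}$ and every walk leaving it immediately enters $V_2=X_{(p_1,z)}$; below $x$ the set is $\{a_1,b_1,b_2,b_3\}$, which together with $\{a_2,a_3\}$ is all of $V$; below each $y_i$ every walk leaving the set immediately enters $V_1$. The largest bag is the one at $p_1$, of size $5$, so this is a directed tree-decomposition of width $4\le k$. Its arcs $(p_1,z)$ and $(y_2,y_3)$ witness Lemma \ref{cbsltw} for the two different sides ($V_2\subseteq W_z\cup X_{(p_1,z)}$ and $V_1\subseteq W_{y_3}\cup X_{(y_2,y_3)}$), and \emph{both} uniform augmentations fail: adding $V_2$ to every guard makes the bag of $y_1$ equal to $\{b_1\}\cup V_1\cup V_2=V$, while adding $V_1$ to every guard makes the bag of $z$ equal to $\{a_3\}\cup V_2\cup V_1=V$; either way the width becomes $5>k$. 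This also refutes the structural hope your width argument rests on (that ``the subtree below $s$ is precisely where $V_2$ lives''): Lemma \ref{cbsltw} gives no control over how the two sides are distributed over the branches of $T$, and when different branches are organized around different sides, no augmentation-only repair of the guards can succeed --- one must shrink or replace some guards, which your proposal never does.

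A secondary gap is the ``rerooting'': normality in (dtw-2) is defined through the descendant sets $\bigcup\{W_r\mid v\leq r\}$, which change completely when an out-tree is rerooted, so your (correct) monotonicity observation --- enlarging guards preserves normality --- applies only when $T$ and $\mathcal{W}$ are left untouched; it does not cover the restructuring you invoke, and you give no construction or verification for it. The paper avoids both problems: after the same normalization by Lemma \ref{w-one}, it keeps $T$ and $\mathcal{W}$ fixed and performs a bottom-up induction that repairs one guard at a time, with precisely the case analysis your proposal lacks. At a leaf the required containment is forced by normality; if two successors of a node already carry the two different sides, then the width is necessarily $|V_1|+|V_2|-1$, so inserting everything into every guard is harmless; if $V_1$ lies entirely below $t'$, then $X_{(t,t')}$ is augmented by $V_2$, paid for because the guard above $t$ may be assumed disjoint from $V_1$; and otherwise $X_{(t,t')}$ is \emph{replaced} by $(V-\bigcup_{t'\leq\tilde{t}}W_{\tilde{t}})\cup V_2$, which is affordable because normality already forces $V-\bigcup_{t'\leq\tilde{t}}W_{\tilde{t}}\subseteq X_{(t,t')}$. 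The induction hypothesis that every guard below $t'$ already contains $V_2$ is what makes each local change free of charge; a one-shot global augmentation has no such hypothesis to lean on, and the example above shows it cannot be made to work.
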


\begin{proof}
Let $G=(V,E)$ be a digraph of directed tree-width at most $k$ and  
$(T, \mathcal{X}, \mathcal{W})$, $T=(V_T,E_T)$,
be a directed tree-decomposition of width at most $k$ for $G$.
By Lemma \ref{w-one} we can assume that holds: $|W_r|=1$ for every $r\in V_T$.

We show the claim by traversing $T$ in a bottom-up order.
Let $t'$ be a leaf of $T$, $t$ be the predecessor of $t'$ in $T$ 
and $W_{t'}=\{v\}$ for some  $v\in V_1$. Then the following holds: 
$V_2\subseteq X_{(t,t')}$ since $(v,v')\in E$ and $(v',v)\in E$ 
for every $v'\in V_2$.

If $t'$ is a non-leaf of $T$ and there is a successor $t''$ of $t'$ in $T$ such that
$V_1\subseteq X_{(t',t'')}$ and there is a successor $t'''$ of $t'$ in $T$ such that
$V_2\subseteq X_{(t',t''')}$. Then the width of  $(T, \mathcal{X}, \mathcal{W})$
is $|V_1|+|V_2|-1$ which allows us to insert $V_1$ into every set $X_e$ as well as  
$V_2$ into every set $X_e$.

Otherwise let $t'$ be a non-leaf of $T$ and $V_2\subseteq X_{(t',t'')}$ for 
every successor $t''$ of $t'$.
Let $t$ be the predecessor of $t'$ and $s$ be the predecessor of $t$ in $T$. 
We distinguish the following two cases.
\begin{itemize}
\item Let $V_1\subseteq \cup_{t'\leq \tilde{t}} W_{\tilde{t}}$. We replace $X_{(t,t')}$ by $X_{(t,t')} \cup V_2$
in order to meet our 
claim for edge $(t,t')$.

We have to show that this does not increase the width of the obtained directed
tree-decomposition at vertex $t'$ and at vertex $t$.

The value of $|W_{t'}\cup \bigcup_{e \sim t'} X_e|$ does not change, since $V_2\subseteq X_{(t',t'')}$
by induction hypothesis and $(t',t'')\sim t'$.

Since $V_1\subseteq \cup_{t\leq \tilde{t}} W_{\tilde{t}}$ by (dtw-2) we can assume that
$V_1\cap X_{(s,t)}=\emptyset$. 
Since all $W_r$ have size one 
we know that $|W_{t}\cup \bigcup_{e \sim t} X_e|\leq |W_{t'}\cup \bigcup_{e \sim t'} X_e|$.

\item Let $V_1\not\subseteq \cup_{t'\leq \tilde{t}} W_{\tilde{t}}$. We distinguish the following two cases.
\begin{itemize}
\item Let $V_2\cap \cup_{t'\leq \tilde{t}} W_{\tilde{t}}=\emptyset$, then $W_{t'}=\{v\}$ for some  $v\in V_1$
and thus $V_2\subseteq X_{(t,t')}$ since $(v,v')\in E$ and $(v',v)\in E$ 
for every $v'\in V_2$.

\item  Let $V_2\cap \cup_{t'\leq \tilde{t}} W_{\tilde{t}}\not=\emptyset$.
Since 
$\{(u,v),(v,u)~|~u\in V_1, v\in V_2\}\subseteq E$ the following is true:
\begin{equation}
V-\bigcup_{t'\leq \tilde{t}} W_{\tilde{t}}= (V_1\cup V_2)-\bigcup_{t'\leq \tilde{t}} W_{\tilde{t}} \subseteq X_{(t,t')}.\label{eq}
\end{equation}
That is, all vertices of $G$ which are not of one of the sets
$W_{\tilde{t}}$ for all successors $\tilde{t}$ of $t'$ are in set $X_{(t,t')}$.

We define $X_{(t,t')}=(V - \cup_{t'\leq  \tilde{t}} W_{\tilde{t}}) \cup V_2$ in order to meet our 
claim for edge $(t,t')$.

We have to show that this does not increase the width of the obtained directed
tree-decomposition at vertex $t'$ and and vertex $t$.

The value of $|W_{t'}\cup \bigcup_{e \sim t'} X_e|$ does not change, since $V_2\subseteq X_{(t',t'')}$
by induction hypothesis and $(t',t'')\sim t'$ and by (\ref{eq}).

Further (\ref{eq}) implies that  $X_{(s,t)}\subseteq X_{(t,t')}$ and thus
$|W_{t}\cup \bigcup_{e \sim t} X_e|\leq |W_{t'}\cup \bigcup_{e \sim t'} X_e|$.  
\end{itemize}
\end{itemize}

Thus if $T$ has a leaf $t'$ such that $W_{t'}=\{v\}$ for some  $v\in V_1$
we obtain a  directed tree-decomposition $(T, \mathcal{X}, \mathcal{W})$, $T=(V_T,E_T)$,
such that $V_2\subseteq X_e$ for every $e\in E_T$. 
And if $T$ has a leaf $t'$ such that $W_{t'}=\{v\}$ for some  $v\in V_2$
we obtain a  directed tree-decomposition $(T, \mathcal{X}, \mathcal{W})$, $T=(V_T,E_T)$,
such that  $V_1\subseteq X_e$ for every $e\in E_T$.  
\end{proof}

\begin{theorem}\label{th-johnson}
Let $G=(V_G,E_G)$ and $H=(V_H,E_H)$ be two vertex-disjoint digraphs,
then the following properties hold. 
\begin{enumerate}
\item
$\dtws(G\oplus H)=\max\{\dtws(G),\dtws(H)\}$

\item 
$\dtws(G\oslash H)=\max\{\dtws(G),\dtws(H)\}$

\item 
$\dtws(G\ominus H)=\max\{\dtws(G),\dtws(H)\}$

\item 
$\dtws(G\otimes H)=\min\{\dtws(G)+|V_H|,\dtws(H)+|V_G|\}$

\end{enumerate}
\end{theorem}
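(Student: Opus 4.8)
The plan is to mirror the structure of the directed path-width proof (Theorem~\ref{th-pw}), exploiting the directed-tree-width analogues of the auxiliary lemmas that were established precisely for this purpose. For the three ``maximum'' cases (1.--3.), the lower bounds $\dtws(G\otimes H)\geq\max\{\dtws(G),\dtws(H)\}$ follow immediately from Lemma~\ref{le-tw-subdigraph2}, since $G$ and $H$ are induced subdigraphs of each composition. For the upper bounds I would take directed tree-decompositions $(T_G,\mathcal{X}_G,\mathcal{W}_G)$ and $(T_H,\mathcal{X}_H,\mathcal{W}_H)$ of optimal width and combine them into a single out-tree: root one decomposition, and attach the root of the other as a child via a fresh arc whose $X$-bag is empty (or, for the order and directed union, chosen so that the relevant $Z$-normality still holds). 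The essential point is to verify condition (dtw-2): I would check that for the added arc the set $\bigcup\{W_r\mid v\leq r\}$ is $X$-normal, which for $\oplus$, $\oslash$, and $\ominus$ holds because every directed walk that enters the ``later'' part $V_H$ from $V_G$ (or re-enters $V_G$) is blocked by the arc orientation structure of the composition.

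The fourth case, the series composition, is the substantive one and is where I would invest the most care. The upper bound $\dtws(G\otimes H)\leq\dtws(G)+|V_H|$ is obtained by taking an optimal decomposition of $G$ and adding all of $V_H$ to every bag $X_e$ and to the relevant $W_r$; symmetrically one gets $\leq\dtws(H)+|V_G|$, so the upper bound is the claimed minimum. The reverse inequality is the heart of the argument. Here I would invoke Lemma~\ref{le-dist}, which was engineered exactly for this: since $V_G$ and $V_H$ form a bidirectional complete bipartite pair in $G\otimes H$, there is an optimal directed tree-decomposition in which \emph{either} $V_G\subseteq X_e$ for every arc $e$ \emph{or} $V_H\subseteq X_e$ for every arc $e$. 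Assume without loss of generality $V_H\subseteq X_e$ for all $e$.

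With that normalization in hand, I would argue that deleting all vertices of $V_H$ from every bag yields a valid directed tree-decomposition of $G$: conditions (dtw-1) and (dtw-2) survive restriction to an induced subdigraph, so the restricted decomposition witnesses $\dtws(G)$. Consequently some node $r$ satisfies $|W_r\cap V_G\cup\bigcup_{e\sim r}(X_e\cap V_G)|\geq\dtws(G)+1$; since $V_H$ sits inside every incident $X_e$ and is disjoint from $V_G$, the full width at that node is at least $\dtws(G)+|V_H|+1-1=\dtws(G)+|V_H|$. The symmetric case gives $\dtws(H)+|V_G|$, so in either case the width is at least $\min\{\dtws(G)+|V_H|,\dtws(H)+|V_G|\}$, matching the upper bound.

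The main obstacle I anticipate is the bookkeeping in the lower bound for the series case: one must be careful that restricting the decomposition to $V_G$ genuinely preserves $Z$-normality (a directed walk in $G$ that re-enters a set could conceivably have used vertices of $V_H$ in the full digraph), and that the width accounting correctly separates the $V_H$-contribution from the $V_G$-contribution at the critical node. I expect Lemma~\ref{le-dist} to absorb most of this difficulty, but confirming that the $|V_H|$ term adds cleanly—rather than overlapping with the $V_G$ count—will require the explicit disjointness $V_G\cap V_H=\emptyset$ together with the ``$V_H$ in every bag'' normalization, and this is the step I would write out most carefully.
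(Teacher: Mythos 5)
Your construction for cases 1--3 differs from the paper's, and the difference is worth noting. The paper glues the root of $T_H$ below a \emph{leaf} of $T_G$; you attach the root of one decomposition as a child of the \emph{root} of the other. Your variant is the more robust one: under root-to-root attachment no arc of $T_G$ or $T_H$ has its subtree set $\bigcup\{W_r \mid v\le r\}$ enlarged, so (dtw-2) for the old arcs reduces to normality inside $G$ resp.\ $H$ together with the observation that in $G\oplus H$, $G\oslash H$ and $G\ominus H$ no directed walk can return from $V_H$ to $V_G$. Under a leaf attachment, by contrast, the subtree sets of arcs of $T_G$ grow by all of $V_H$, and in $G\oslash H$ a walk may leave such a set inside $V_G$ and re-enter it through $V_H$, so additional work is needed there. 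One caveat on your write-up: you must verify (dtw-2) for \emph{all} arcs of the combined tree, not only for the fresh arc, since the composition introduces arcs between $V_G$ and $V_H$ that exist in neither $G$ nor $H$; with your attachment this verification goes through by the blocking argument you state.

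The genuine gap is in the lower bound for case 4. After normalizing via Lemma~\ref{le-dist} so that $V_H\subseteq X_e$ for every arc $e$, you assert that deleting $V_H$ from every bag yields a valid directed tree-decomposition of $G$ because ``(dtw-1) and (dtw-2) survive restriction to an induced subdigraph.'' Under the paper's definition this is false: (dtw-1) requires every $W_r$ to be \emph{nonempty}, and since $\mathcal{W}$ partitions $V_G\cup V_H$, every node $r$ with $W_r\subseteq V_H$ ends up with $W_r\cap V_G=\emptyset$ after the deletion --- and such nodes necessarily exist. Repairing this forces you to remove those nodes and reattach their children to their predecessors, and this surgery is exactly what breaks your width accounting at the critical node: a surviving node $s$ of the contracted tree is now incident to arcs inherited from deleted nodes, whose bags come from arcs \emph{not} incident to $s$ in the original tree. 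Hence ``some node of the restricted decomposition has width at least $\dtws(G)$'' does not immediately yield a node of the \emph{original} decomposition whose width is at least $\dtws(G)+|V_H|$.

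The paper spends the bulk of its proof closing precisely this hole: it first applies Lemma~\ref{w-one} to assume $|W_t|=1$ for every node, and then exploits the explicit form of the bags produced by the construction in the proof of Lemma~\ref{le-dist} (in its notation, $X_{(s,t)}$ consists of the whole ``kept-in-every-bag'' side together with all vertices outside the subtree below $t$) to derive the containment~(\ref{rel}), $X_{(t,t')}\subseteq X_{(s,t)}$, which guarantees that inherited bags contribute nothing new to the width at $s$. Neither Lemma~\ref{w-one} nor this structural property of the bags appears in your proposal, and Lemma~\ref{le-dist} used as a black box does not supply them; this is the difficulty you hoped it would ``absorb,'' but it does not. (Your stated normality worry, on the other hand, is harmless for the surviving arcs: a bad walk for the restricted decomposition lies entirely in $V_G$, so removing $V_H$ from the separating sets cannot unblock it; the normality that actually needs re-checking is for the \emph{new} arcs created by the contraction.)
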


\begin{proof}
Let $G=(V_G,E_G)$ and $H=(V_H,E_H)$ be two vertex-disjoint digraphs.
Further let $(T_G, \mathcal{X}_G, \mathcal{W}_G)$ be a directed tree-decomposition of $G$
such that  $r_G$ is the root of $T_G=(V_{T_G},E_{T_G})$ and $(T_H, \mathcal{X}_H, \mathcal{W}_H)$
be a directed tree-decomposition of $H$
such that  $r_H$ is the root of $T_H=(V_{T_H},E_{T_H})$. 

\begin{enumerate}
\item
We define  a directed tree-decomposition $(T_J, \mathcal{X}_J, \mathcal{W}_J)$ for $J=G\oplus H$.
Let $\ell_G$ be a leaf of $T_G$. Let $T_J$ be the disjoint union of $T_G$ and $T_H$ 
with an additional arc  $(\ell_G,r_H)$.
Further let 
$\mathcal{X}_J=\mathcal{X}_G \cup  \mathcal{X}_H\cup \{X_{(\ell_G,r_H)}\}$, where $X_{(\ell_G,r_H)}=\emptyset$  
and $\mathcal{W}_J=\mathcal{W}_G\cup \mathcal{W}_H$.
Triple $(T_J, \mathcal{X}_J, \mathcal{W}_J)$ satisfies (dtw-1) since 
the combined decompositions satisfy  (dtw-1). Further  $(T_J, \mathcal{X}_J, \mathcal{W}_J)$ satisfies (dtw-2)
since additionally in $J$ there is no arc from a vertex of $H$ to a vertex of $G$. This shows that 
$\dtws(G\oplus H)\leq\max\{\dtws(G),\dtws(H)\}$.
Since $G$ and $H$ are induced subdigraphs of $G\oplus H$, by Lemma \ref{le-tw-subdigraph2}
the directed tree-width of both leads to a lower bound on the directed tree-width
for the combined graph.

\item
The same arguments lead to $\dtws(G\oslash H)=\max\{\dtws(G),\dtws(H)\}$.

\item
The same arguments lead to $\dtws(G\ominus H)=\max\{\dtws(G),\dtws(H)\}$.

\item 
In order to show  $\dtws(G\otimes H)\leq \dtws(G)+|V_H|$ 
let $T_J$ be the disjoint union of a new root $r_J$ and $T_G$ 
with an additional arc  $(r_J,r_G)$. Further let 
$\mathcal{X}_J=\mathcal{X}'_G \cup  \{X_{(r_J,r_G)}\}$, where $\mathcal{X}'_G =\{X_e\cup V_H~|~e\in E_{T_G}\}$ 
and $X_{(r_J,r_G)}=V_H$  
and $\mathcal{W}_J=\mathcal{W}_G\cup \{W_{r_H}\}$, where $W_{r_J}=V_H$.
Then we obtain by  $(T_J, \mathcal{X}_J, \mathcal{W}_J)$  
a directed tree-decomposition of width at most $ \dtws(G)+|V_H|$ for $G\otimes H$.

In the same way a new root $r_J$ and $T_H$ 
with an additional arc  $(r_J,r_H)$, $\mathcal{X}'_H =\{X_e\cup V_G~|~e\in E_{T_H}\}$, 
$X_{(r_J,r_H)}=V_G$,  $W_{r_J}=V_G$
lead to a directed tree-decomposition of width at most $ \dtws(H)+|V_G|$ for $G\otimes H$.
Thus $\dtws(G\otimes H)\leq \min\{\dtws(G)+|V_H|,\dtws(H)+|V_G|\}$.

For the reverse direction let $(T_J, \mathcal{X}_J, \mathcal{W}_J)$, $T_J=(V_T,E_T)$,
be a directed tree-de\-com\-position of minimal width for $G\otimes H$.
By Lemma \ref{le-dist} we can assume that $V_G\subseteq X_e$ for every $e\in E_T$
or $V_H\subseteq X_e$ for every $e\in E_T$. 
Further by Lemma \ref{w-one} we can assume that $|W_t|=1$ for every $t\in V_T$.

We assume that $V_G\subseteq X_e$ for every $e\in E_T$.
We define $(T'_J, \mathcal{X}'_J, \mathcal{W}'_J)$,  $T'_J=(V'_T,E'_T)$, by $X'_{e}= X_{e}\cap V_H$
and $W'_s=W_s \cap V_H$. Whenever this leads to an empty set $W'_s$ where $t$
is the predecessor of $s$ in $T'_J$ we remove
vertex $s$ from $T'_J$ and replace every arc $(s,t')$ by $(t,t')$ with
the corresponding set $X_{(t,t')}=X_{(s,t')}\cap V_H$.

Then $(T'_J, \mathcal{X}'_J, \mathcal{W}'_J)$ is a  
directed tree-decomposition of $H$  as follows.  
\begin{itemize}
\item $\mathcal{W}'_J$ is a partition of $V_H$ into nonempty sets.

\item Let $e$ be an arc in $T'_J$ which is also in $T_J$. Since $e \sim s$ implies
$W_s=W'_s=\{v\}$ for some $v\in V_H$ normality condition remains true.

Arcs $(t,t')$ in $T'_J$ which are not in $T_J$  are obtained by two arcs $(t,s)$  and
$(s,t')$ from $T_J$.  If $\cup\{W_r ~|~ r\in V_T, t'\leq r\}$ is 
$X_{(s,t')}$-normal, then $\cup\{W_r ~|~ r\in V'_T, t'\leq r\}$ is 
$X_{(t,t')}$-normal since $X_{(t,t')}=X_{(s,t')}\cap V_H$.
\end{itemize}

The width of  $(T'_J, \mathcal{X}'_J, \mathcal{W}'_J)$  is at most $\dtws(G\otimes H)-|V_G|$ as follows.  
\begin{itemize}
\item Let $s$ be a vertex in $T'_J$ such that $W_t\cap V_H\neq \emptyset$ for all $(s,t)$ in $T_J$.
$$\begin{array}{lcll}
|W'_{s}\cup \bigcup_{e \sim s} X'_e| &= &|(W_{s}\cap V_H)\cup \bigcup_{e \sim s} (X_e\cap V_H)|& \text{ by definition} \\
       &=& |(W_{s}\cup \bigcup_{e \sim s} X_e)\cap V_H| & \text{ factor out } V_H \\
       &=& |W_{s}\cup \bigcup_{e \sim s} X_e|-  |V_G|  & \text{ since }  V_G\subseteq X_e \\
\end{array} 
$$
\item Let $s$ be a vertex in $T'_J$ such that there is $(s,t)$ in $T_J$ with $W_t\cap V_H= \emptyset$.
%
%
%
\begin{eqnarray}
|W'_{s}\cup \bigcup_{e \sim s} X'_e| &=&
|(W_{s}\cap V_H)\cup  \left( X_{(t'',s)}  \cap V_H \right)
\cup\bigcup_{\substack{(s,t) \in E_T \\ W_t \cap V_H=\emptyset}} \left( X_{(t,t')} \cap V_H \right)\nonumber \\
&&\cup \bigcup_{\substack{(s,t) \in E_T \\ W_t \cap V_H\neq \emptyset}} \left( X_{(s,t)}  \cap V_H \right)| \label{no-eq}
\end{eqnarray}

In order to bound this value we observe that for $W_t\cap V_H= \emptyset$ the following is true:
$W_t=\{v\}$ for $v\in V_G$. Then $X_{(s,t)}=((V_G\cup V_H)-\cup_{t\leq \tilde{t} } W_{\tilde{t}}) \cup V_G$ by 
Lemma \ref{le-dist}.
%
%
%
That is, $X_{(s,t)}$ consists
of all vertices from $V_G$ and all vertices which are not of one of the
sets $W_{\tilde{t}}$ for all successors $\tilde{t}$ of $t$.
Applying this argument to $X_{(t,t')}$ 
we only can have $v$ as an additional vertex. 
But since $v\in V_G$ we know that $v\in X_{(s,t)}$ by our assumption.
This implies 
\begin{equation}
X_{(t,t')}\subseteq X_{(s,t)} \text{ for all arcs } (s,t) \text{ in } T_J \text{ such that } W_t\cap V_H= \emptyset \label{rel}
\end{equation}
which allows the following estimations:
$$\begin{array}{lcll}
|W'_{s}\cup \bigcup_{e \sim s} X'_e| &= &|(W_{s}\cap V_H)\cup \bigcup_{e \sim s} (X_e\cap V_H)|  & \text{ by } (\ref{no-eq}) \text{ and } (\ref{rel})\\
      &=  &|(W_{s} \cup \bigcup_{e \sim s} X_e)\cap V_H| & \text{ factor out } V_H\\
      &=  & |W_{s}\cup \bigcup_{e \sim s} X_e|-  |V_G|    & \text{ since }  V_G\subseteq X_e \\
\end{array} 
$$
\end{itemize}

Thus the width of $(T'_J, \mathcal{X}'_J, \mathcal{W}'_J)$  is at most $\dtws(G\otimes H)-|V_G|$
and since $(T'_J, \mathcal{X}'_J, \mathcal{W}'_J)$ is a directed tree-decomposition of $H$
it follows $\dtws(H)\leq \dtws(G\otimes H)-|V_G|$ 

If we assume that $V_H\subseteq X_e$ for every $e\in E_T$ 
it follows that $\dtws(G)\leq \dtws(G\otimes H)-|V_H|$. 
\end{enumerate}
\end{proof}

The proof of Theorem \ref{th-johnson} even shows that for any directed co-graph
there is a tree-decomposition $(T, \mathcal{X}, \mathcal{W})$  
of minimal width such that $T$ is a path.



Similar to the path-width results, we conclude the following results.

\begin{lemma}\label{le-order-tw}
 Let $G$ and $H$ be two directed co-graphs,
then $\tws(\un(G\oslash H))>\dtws(G\oslash H)$.
\end{lemma}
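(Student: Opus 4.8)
The plan is to mirror exactly the argument used for the path-width analogue in Lemma \ref{le-order}, replacing every path-width quantity by the corresponding tree-width quantity and citing the appropriate tree-width facts. The key observation is that the order composition on the digraph side corresponds to the join composition on the underlying undirected graph, i.e.\ $\un(G\oslash H)=\un(G)\times\un(H)$, so I can compute the left-hand side using the known formula for the undirected tree-width of a join.

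First I would rewrite $\tws(\un(G\oslash H))$ as $\tws(\un(G)\times\un(H))$. Then I would invoke the Bodlaender--M\"ohring result \cite{BM93}, which (for co-graphs, and $\un(G),\un(H)$ are co-graphs since $G,H$ are directed co-graphs) gives $\tws(\un(G)\times\un(H))=\min\{\tws(\un(G))+|V_H|,\ \tws(\un(H))+|V_G|\}$. The crucial strict inequality comes next: since $|V_H|>\dtws(H)$ and $|V_G|>\dtws(G)$ (a single vertex already forces the tree-width to be at most $|V|-1$, so the directed tree-width of any digraph is strictly less than its number of vertices), I can replace $|V_H|$ by $\dtws(H)$ and $|V_G|$ by $\dtws(G)$ to get a \emph{strict} lower bound. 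After that I apply Lemma \ref{le-tw-d-ud}, which gives $\tws(\un(G))\geq\dtws(G)$ and $\tws(\un(H))\geq\dtws(H)$, to weaken the two summands down to $\min\{\dtws(G)+\dtws(H),\ \dtws(H)+\dtws(G)\}=\dtws(G)+\dtws(H)$. Finally, trivially $\dtws(G)+\dtws(H)\geq\max\{\dtws(G),\dtws(H)\}$, and by Theorem \ref{th-johnson}(2) this maximum equals $\dtws(G\oslash H)$, closing the chain.

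I would present this as a single displayed \texttt{array} of (in)equalities exactly as in the proof of Lemma \ref{le-order}, annotating each line with the justification on the right. The only subtle point—the one I would state carefully—is the strict step $|V_H|>\dtws(H)$ and $|V_G|>\dtws(G)$, which is what makes the whole inequality strict rather than merely $\geq$; everything else is a routine substitution. I do not expect any real obstacle here, since the directed tree-width versions of all the needed ingredients (the subgraph monotonicity/underlying-graph bound of Lemma \ref{le-tw-d-ud} and the join formula of Theorem \ref{th-johnson}) are already available, so the proof is essentially a transcription of the path-width argument with $\pws\mapsto\tws$ and $\dpws\mapsto\dtws$.
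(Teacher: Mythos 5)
Your proposal is correct and is exactly the argument the paper intends: the paper gives no explicit proof of Lemma~\ref{le-order-tw}, saying only that it follows ``similar to the path-width results'', i.e.\ by transcribing the proof of Lemma~\ref{le-order} with $\pws\mapsto\tws$ and $\dpws\mapsto\dtws$, using the join formula of \cite{BM93}, Lemma~\ref{le-tw-d-ud}, and Theorem~\ref{th-johnson}, which is precisely your chain of (in)equalities. Your explicit justification of the strict step via $\dtws(H)\leq |V_H|-1$ and $\dtws(G)\leq |V_G|-1$ is sound and in fact more careful than the corresponding (unannotated) strict inequality in the paper's path-width proof.
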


\begin{corollary}\label{xxtw}
Let $G$ be some directed co-graph, then $\dtws(G)=\tws(u(G))$ if and only if there is
an expression for $G$ without any order operation. 
Further $\dtws(G)=0$ if and only if there is
an expression for $G$ without any series operation.
\end{corollary}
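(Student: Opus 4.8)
The plan is to mirror the path-width argument of Corollary~\ref{xx}, replacing Theorem~\ref{th-pw} by its tree-width analogue Theorem~\ref{th-johnson} and Lemma~\ref{le-order} by Lemma~\ref{le-order-tw}. The two identities I would use repeatedly are $\un(G_1\oplus G_2)=\un(G_1)\cup\un(G_2)$ and $\un(G_1\otimes G_2)=\un(G_1)\times\un(G_2)$, together with the tree-width formulas that Bodlaender and M\"ohring \cite{BM93} establish for (undirected) co-graphs, namely $\tws(A\cup B)=\max\{\tws(A),\tws(B)\}$ and $\tws(A\times B)=\min\{\tws(A)+|V_B|,\tws(B)+|V_A|\}$.

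For the implication ``order-free expression $\Rightarrow \dtws(G)=\tws(u(G))$'' I would induct on a binary di-co-tree that uses only $\oplus$ and $\otimes$ (Lemma~\ref{le-di-co}). The single-vertex base case gives $0=0$. At a union node, Theorem~\ref{th-johnson}(1) yields $\dtws(G_1\oplus G_2)=\max\{\dtws(G_1),\dtws(G_2)\}$, which by the induction hypothesis and the first \cite{BM93} formula equals $\tws(\un(G_1\oplus G_2))$; at a series node Theorem~\ref{th-johnson}(4) and the second \cite{BM93} formula agree term by term, since the vertex counts $|V_{G_1}|,|V_{G_2}|$ are shared by a digraph and its underlying graph. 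The corresponding implication for Part~2 is even shorter: an expression using only $\oplus$ and $\oslash$ has, by Theorem~\ref{th-johnson}(1)--(3), width equal to the maximum over the leaves, hence $0$.

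The converse of Part~2 I would obtain from the global characterisation $\dtws(G)=0$ iff $G$ is a DAG: a series node $\otimes$ forces a bidirectional pair and thus a directed $2$-cycle, so a directed co-graph is acyclic precisely when some (equivalently every) expression avoids $\otimes$, and any non-DAG has directed tree-width at least $1$. For the converse of Part~1 I would start from the fact that an order-free directed co-graph is exactly a complete biorientation $\overleftrightarrow{\un(G)}$ (both $\oplus$ and $\otimes$ create only paired arcs), so ``no order-free expression'' means $G$ is not complete bioriented and hence carries an order composition somewhere in its decomposition. When the outermost operation of $G$ is that order composition, Lemma~\ref{le-order-tw} gives $\dtws(G)<\tws(\un(G))$ at once, matching the desired strict inequality.

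The step I expect to be the main obstacle is precisely the converse of Part~1 when an order composition is nested \emph{below} a $\oplus$ or $\otimes$ node. There the matching $\max$/$\min$ recurrences of Theorem~\ref{th-johnson} and \cite{BM93} need not transmit the strict gap furnished by Lemma~\ref{le-order-tw}, because the extremum can be realised by a complete bioriented part while the non-bioriented part only contributes a strictly smaller directed width. Closing this direction therefore demands an argument that no surrounding disjoint-union or series layer can absorb the deficit created by the order composition, and this is the point on which I would concentrate the effort of the proof.
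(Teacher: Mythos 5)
Your ``if'' directions are correct and are essentially the paper's intended route (the paper in fact prints no proof at all for Corollary~\ref{xxtw}; it silently defers to the proof of Corollary~\ref{xx}): induction over a binary expression tree (Lemma~\ref{le-di-co}) using Theorem~\ref{th-johnson} and the recurrences of \cite{BM93}, which commute with $\un(\cdot)$ exactly as you say. Your key structural observation -- that $\oplus$ and $\otimes$ create arcs only in opposite pairs, so the order-free directed co-graphs are precisely the complete bioriented ones -- is also the right way to make ``no order-free expression exists'' usable. For Part~2 your argument works, and you can even avoid the external fact that $\dtws(G)=0$ iff $G$ is acyclic: any expression containing a series node forces a bidirectional pair, and Lemma~\ref{cl1tw} then forces some $W_s\cup X_{(r,s)}$ of size at least $2$, hence width at least $1$.

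The obstacle you flag in your final paragraph, however, is not merely the hard part of the proof -- it is fatal, because the ``only if'' direction of Part~1 is false as stated. Take $G=(\bullet\oslash\bullet)\oplus(\bullet\otimes\bullet)$, the disjoint union of a single arc $(u,v)$ and a bidirectionally complete pair. By Theorem~\ref{th-johnson}, $\dtws(\bullet\oslash\bullet)=0$ and $\dtws(\bullet\otimes\bullet)=1$, so $\dtws(G)=\max\{0,1\}=1$, while $\tws(\un(G))=\tws(K_2\cup K_2)=1$; thus $\dtws(G)=\tws(\un(G))$, yet $G$ is not complete bioriented (the arc $(u,v)$ has no reverse), so by your own characterisation $G$ admits no order-free expression. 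Exactly as you predicted, the strict deficit created by the order composition (here $0<1$) is absorbed by the $\max$ at the enclosing $\oplus$ node; the same example with $\dpws$ in place of $\dtws$ refutes Corollary~\ref{xx} as well. The paper's proof of Corollary~\ref{xx} -- the only proof Corollary~\ref{xxtw} can appeal to -- contains precisely this hole: it invokes Lemma~\ref{le-order} (respectively Lemma~\ref{le-order-tw}) as though it applied to an order operation occurring anywhere in an expression, whereas the lemma only treats the case where the order composition is the \emph{outermost} operation. So your suspicion should be upgraded to a refutation: no argument can close this gap, and only the ``if'' direction of Part~1, together with both directions of Part~2, is salvageable.
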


\section{Directed tree-width and directed path-width of special digraphs}\label{pw-tw-cogr}

For general digraphs the directed tree-width is at most the directed path-width are
by the following Lemma.

\begin{lemma}\label{le-tw-pw}
Let $G$ be some digraph, then $\dtws(G)\leq \dpws(G)$.
\end{lemma}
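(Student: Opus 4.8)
The plan is to convert a directed path-decomposition of $G$ into an (arboreal) directed tree-decomposition whose underlying out-tree is a directed path, without increasing the width; this is the directed analogue of the undirected inequality $\tws\le\pws$. Concretely, I would start with a directed path-decomposition $(X_1,\ldots,X_r)$ of $G$ of width $k=\dpws(G)$. For each vertex $v$ let $f(v)=\min\{i\mid v\in X_i\}$ be the index of its first bag, and order the vertices $v_1,\ldots,v_n$ so that $f(v_1)\le\cdots\le f(v_n)$, breaking ties arbitrarily. I take $T$ to be the directed path $t_1\to\cdots\to t_n$ with root $t_1$, set $W_{t_i}=\{v_i\}$, and for each arc $(t_i,t_{i+1})$ define the edge bag $X_{(t_i,t_{i+1})}=\{v_1,\ldots,v_i\}\cap X_{f(v_{i+1})}$, that is, the already-placed vertices still present in the bag where $v_{i+1}$ first appears. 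Condition (dtw-1) is then immediate: the $W_{t_i}$ are nonempty singletons partitioning $V$ by (dpw-1). (One could instead allow larger node bags and invoke Lemma \ref{w-one}, but singletons suffice.)

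For the width bound, the point is that every set incident to a fixed node $t_i$ lies inside the single original bag $X_{f(v_i)}$. The incoming edge bag $\{v_1,\ldots,v_{i-1}\}\cap X_{f(v_i)}$ and the singleton $\{v_i\}$ are trivially contained in $X_{f(v_i)}$. For the outgoing edge bag, any $v_j$ with $j\le i$ and $v_j\in X_{f(v_{i+1})}$ satisfies $f(v_j)\le f(v_i)\le f(v_{i+1})$, so (dpw-3) propagates its membership back to $X_{f(v_i)}$. Hence $W_{t_i}\cup\bigcup_{e\sim t_i}X_e\subseteq X_{f(v_i)}$, giving $|W_{t_i}\cup\bigcup_{e\sim t_i}X_e|-1\le|X_{f(v_i)}|-1\le k$.

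The normality condition (dtw-2) is the main obstacle. For the arc $(t_i,t_{i+1})$ the descendant set is $S_i=\{v_{i+1},\ldots,v_n\}$, and I must show $S_i$ is $X_{(t_i,t_{i+1})}$-normal. Suppose not: there is a directed walk in $G-X_{(t_i,t_{i+1})}$ with first and last vertices in $S_i$ that visits a vertex of $\{v_1,\ldots,v_i\}\setminus X_{f(v_{i+1})}$. Along this walk I pick the first arc $(u,u')$ leaving $S_i$, so $u\in S_i$ and $u'\in\{v_1,\ldots,v_i\}$. By (dpw-2) there are indices $a\le b$ with $u\in X_a$ and $u'\in X_b$; since $f(u)\ge f(v_{i+1})$ we get $a,b\ge f(v_{i+1})$, while $f(u')\le f(v_{i+1})$, so (dpw-3) forces $u'\in X_{f(v_{i+1})}$ and therefore $u'\in X_{(t_i,t_{i+1})}$, contradicting that the walk avoids this set.

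Thus $S_i$ is $X_{(t_i,t_{i+1})}$-normal for every arc, $(T,\mathcal{X},\mathcal{W})$ is a directed tree-decomposition of $G$ of width at most $k$, and $\dtws(G)\le\dpws(G)$ follows. The only delicate interaction to get right is exactly the one above: (dpw-2) supplies a bag realizing each crossing arc, and (dpw-3) then pulls the endpoint membership back to the cut bag $X_{f(v_{i+1})}$; once this is isolated, both the width estimate and the normality check reduce to routine applications of the interval property.
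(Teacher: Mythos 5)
Your proof is correct, and it follows the same overall strategy as the paper -- convert a directed path-decomposition into an arboreal decomposition whose out-tree is a directed path -- but the concrete construction differs in a way worth noting. The paper takes one tree node per bag $Y_i$, sets $W_{v_1}=Y_1$, $W_{v_i}=Y_i\setminus(Y_1\cup\ldots\cup Y_{i-1})$ and $X_{(v_i,v_{i+1})}=Y_i\cap Y_{i+1}$, and concludes immediately from $W_{v_i}\cup X_{(v_{i-1},v_i)}\cup X_{(v_i,v_{i+1})}\subseteq Y_i$; it verifies neither (dtw-2) nor the nonemptiness required in (dtw-1), and indeed its sets $W_{v_i}$ can be empty when $Y_i\subseteq Y_1\cup\ldots\cup Y_{i-1}$, so strictly one must first discard or merge such redundant bags. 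Your construction -- one node per vertex, ordered by first occurrence $f(v)$, singleton node bags, and edge bags $\{v_1,\ldots,v_i\}\cap X_{f(v_{i+1})}$ -- sidesteps the empty-bag issue by design (and could alternatively be seen as the paper's decomposition refined via Lemma \ref{w-one}), and you supply exactly the two checks the paper omits: the width bound, pulled back into the single bag $X_{f(v_i)}$ via the interval property (dpw-3), and the normality of each descendant set, where (dpw-2) produces a bag realizing the first arc leaving $S_i$ and (dpw-3) forces the re-entered endpoint into $X_{f(v_{i+1})}$, hence into the cut bag. So your route is a finer-grained variant of the paper's argument: slightly longer, but self-contained and complete where the paper's sketch leaves gaps.
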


\begin{proof}
Let  ${\cal Y}=(Y_1,\ldots,Y_r)$ be a directed path-decomposition of
some digraph $G$. We obtain a directed tree-decomposition of
$G$ by $(T, \mathcal{X}, \mathcal{W})$, where $$T=(\{v_1,\ldots,v_r\},\{(v_1,v_2),\ldots, (v_{r-1},v_r)\})$$
is a directed path, $W_{v_1}=Y_1$, $W_{v_i}=Y_i-(Y_1\cup\ldots Y_{i-1})$, and $X_{(v_i,v_{i+1})}=Y_i\cap Y_{i+1}$.
Since $W_i\cup X_{(v_{i-1},v_i)}\cup X_{(v_{i},v_i+1)}\subseteq Y_i$ it follows
that $\dtws(G)\leq \dpws(G)$.
\end{proof}

There are several examples where the equality does not
hold. 

\begin{example}
Every complete biorientation of a rooted tree has directed tree-width $1$ and
a directed path-width depending on its height. 
The path-width of perfect $2$-ary trees of hight $h$ is 
$\lceil h/2\rceil$ (cf. \cite{Sch89}) and for $k\geq 3$ the  
path-width of  perfect $k$-ary trees of hight $h$ is exactly $h$ 
by Corollary 3.1 of \cite{EST94}.
\end{example}

\subsection{Directed Co-graphs}

\begin{theorem}\label{main}
For every directed co-graph $G$, it holds that $\dpws(G)=\dtws(G)$.
\end{theorem}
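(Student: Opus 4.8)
The plan is to proceed by structural induction on the directed co-tree of $G$, using the fact (from Lemma~\ref{le-di-co}) that we may assume the di-co-tree is binary, so that $G$ is either a single vertex or is built from two smaller directed co-graphs $G_1,G_2$ via one of the three operations $\oplus$, $\oslash$, or $\otimes$. The base case is immediate: a single vertex has $\dpws(G)=\dtws(G)=0$. For the inductive step I would invoke the two structural theorems already proved, namely Theorem~\ref{th-pw} for directed path-width and Theorem~\ref{th-johnson} for directed tree-width, and observe that the two theorems give \emph{syntactically identical} recurrences for each of the three co-graph operations.

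Concretely, for the disjoint union I have
$\dpws(G_1\oplus G_2)=\max\{\dpws(G_1),\dpws(G_2)\}$ and
$\dtws(G_1\oplus G_2)=\max\{\dtws(G_1),\dtws(G_2)\}$;
for the order composition I have
$\dpws(G_1\oslash G_2)=\max\{\dpws(G_1),\dpws(G_2)\}$ and
$\dtws(G_1\oslash G_2)=\max\{\dtws(G_1),\dtws(G_2)\}$;
and for the series composition I have
$\dpws(G_1\otimes G_2)=\min\{\dpws(G_1)+|V_2|,\dpws(G_2)+|V_1|\}$ together with
$\dtws(G_1\otimes G_2)=\min\{\dtws(G_1)+|V_2|,\dtws(G_2)+|V_1|\}$.
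In each case the right-hand side is the \emph{same function} of the corresponding parameters of $G_1$ and $G_2$ (and of the fixed vertex-count data $|V_1|,|V_2|$, which do not depend on which width we are computing). By the induction hypothesis $\dpws(G_i)=\dtws(G_i)$ for $i=1,2$, so substituting equal arguments into identical functions yields $\dpws(G)=\dtws(G)$ for each of the three cases. This closes the induction.

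I would present this as a clean three-case verification rather than attempting to re-derive the decompositions, since all the real work has been packaged into Theorems~\ref{th-pw} and~\ref{th-johnson}. It is worth remarking, as the text after Theorem~\ref{th-johnson} already notes, that the optimal directed tree-decomposition can be taken so that $T$ is a path, which is conceptually why the two parameters coincide here; but for the formal proof this observation is not needed, as the matching recurrences already suffice.

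The step I expect to be the main (though modest) obstacle is purely bookkeeping: ensuring that the recurrence for $\otimes$ really does pair up correctly, i.e.\ that the $|V_H|$ and $|V_G|$ terms carry over verbatim between the path-width and tree-width statements. Since these are combinatorial constants determined by $G$ alone and not by the choice of width parameter, the matching is automatic, but it is the one place where one must check that the two theorems are genuinely the same recurrence and not merely similar. Given that verification, the proof is short.
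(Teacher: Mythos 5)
Your proposal is correct and follows essentially the same route as the paper: an induction over the (binary) di-co-tree with the single-vertex base case, closing each of the three operation cases by combining the matching recurrences of Theorem~\ref{th-pw} and Theorem~\ref{th-johnson}. The only difference is presentational --- the paper writes out just the $\oplus$ case and states that the other two are similar, whereas you verify all three explicitly, which is if anything slightly more complete.
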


\begin{proof}
Let $G=(V,E)$ be some  directed co-graph. We show the result by
induction on the number of vertices $|V|$. 
If $|V|=1$, then $\dpws(G)=\dtws(G)=0$. If $G=G_1 \oplus G_2$, then by
Theorem \ref{th-pw} and Theorem \ref{th-johnson} follows:
$$\dpws(G)=\max\{\dpws(G_1),\dpws(G_2)\}=\max\{\dtws(G_1),\dtws(G_2)\}=\dtws(G).$$
For the other two operations a similar relation holds.
\end{proof}

By Lemma \ref{le-c-bi} and Lemma  \ref{le-tw-c-bi}
our results generalize the known results from \cite{BM90,BM93}
but can not be obtained by the known results.

\begin{theorem} For every directed co-graph $G=(V,E)$ which is given by 
a binary di-co-tree the directed path-width and directed tree-width can be
computed in time $\bigo(|V|)$.
\end{theorem}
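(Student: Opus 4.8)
The plan is to give a linear-time algorithm that, given a binary di-co-tree $T$ for $G$, computes the value $\dpws(G)=\dtws(G)$ (these are equal by Theorem~\ref{main}) together with an explicit directed path-decomposition realizing it. Since we already know by Theorem~\ref{th-pw} exactly how $\dpws$ behaves under the three operations $\oplus$, $\oslash$, and $\otimes$, the natural approach is a single bottom-up traversal of $T$ in which each node stores a constant amount of numeric information, combined using the recurrences of Theorem~\ref{th-pw}.

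\medskip

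First I would observe that, to apply the recurrence for the series composition, namely $\dpws(G_1\otimes G_2)=\min\{\dpws(G_1)+|V_2|,\dpws(G_2)+|V_1|\}$, each subtree needs to know not only its directed path-width but also the number of vertices of the subgraph it represents. I therefore attach to every node $x$ of $T$ two integers: $n(x)=|V_x|$, the number of vertices (equivalently, the number of leaves in the subtree rooted at $x$), and $w(x)=\dpws(G_x)$. For a leaf, $n(x)=1$ and $w(x)=0$. For an inner node $x$ with children $x_1,x_2$, I set $n(x)=n(x_1)+n(x_2)$, and I compute $w(x)$ from $w(x_1),w(x_2),n(x_1),n(x_2)$ according to the operation labelling $x$: for $\oplus$ and $\oslash$ I use $w(x)=\max\{w(x_1),w(x_2)\}$, and for $\otimes$ I use $w(x)=\min\{w(x_1)+n(x_2),\,w(x_2)+n(x_1)\}$. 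Each such update is $\bigo(1)$, and since a binary tree on $|V|$ leaves has $\bigo(|V|)$ nodes (we may invoke Lemma~\ref{le-di-co} to assume $T$ is binary), the total running time for computing $w(r)=\dpws(G)$ at the root $r$ is $\bigo(|V|)$.

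\medskip

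To also \emph{output} a decomposition within the same time bound, I would make the traversal constructive, mirroring the constructive upper-bound arguments in the proof of Theorem~\ref{th-pw}. Concretely, each node returns a directed path-decomposition stored as a doubly linked list of bags so that concatenation and the insertion of a fixed vertex set into every bag can be charged appropriately. For $\oplus$ and $\oslash$ the decomposition is the concatenation of the two childrens' sequences; for $\otimes$ one picks the child $x_i$ achieving the minimum and forms the decomposition of that child while unioning the vertex set of the other child into each of its bags. The one subtlety is that na\"ively adding $|V_{x_j}|$ vertices to every bag costs time proportional to the bag sizes, which could be superlinear if done repeatedly up the tree; I would circumvent this by representing each series join lazily, recording with every path-decomposition a ``global'' vertex set to be added to all its bags and only materializing the bags once, at the root. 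With this lazy representation every operation again takes time proportional to the number of list nodes created, giving $\bigo(|V|)$ overall.

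\medskip

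The main obstacle I anticipate is precisely this bookkeeping for the series composition: ensuring that the repeated ``add $V_H$ to every bag'' steps do not accumulate to superlinear cost, while still producing, at the end, a genuine directed path-decomposition of the claimed width. The lazy-union device resolves this, and correctness of the width is guaranteed termwise by Theorem~\ref{th-pw}, since each combining step is exactly one of the four proven identities. For the directed tree-width value the identical recurrences hold by Theorem~\ref{th-johnson}, so the same computation yields $\dtws(G)$; and by Theorem~\ref{main} the two coincide, so a single pass suffices for both parameters. I would conclude by noting that reading the di-co-tree and the final materialization of bags are each linear, so the whole procedure runs in time $\bigo(|V|)$.
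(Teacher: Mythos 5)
Your proposal is correct and follows essentially the same route as the paper: a bottom-up traversal of the binary di-co-tree storing the subtree size and the width at each node, combined via the recurrences of Theorem~\ref{th-pw} and Theorem~\ref{th-johnson} in $\bigo(1)$ time per node, which is exactly the algorithm of Fig.~\ref{fig:algo} with precomputed subdigraph sizes. The additional lazy construction of an explicit decomposition is a nice extra, but the theorem only asks for the numeric values, so the paper does not need (or include) that part.
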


\begin{proof}
The statement follows by the algorithm given in Fig.\ \ref{fig:algo}, 
Theorem \ref{th-pw}, and Theorem \ref{th-johnson}. The necessary
sizes of the subdigraphs defined by subtrees of di-co-tree $T_G$ 
can be precomputed in time $\bigo(|V|)$. 
\end{proof}

\begin{figure}[htbp]
\hrule
{\strut\footnotesize \bf Algorithm {\sc Directed Path-width}($v$)} 
\hrule
\vspace{-2mm}
\begin{tabbing}
xx \= xx \= xx \= xx \= xx \= xx \= xx\= xx \= xxxxxxxxxxxxxxxxxxxxxxx \=\kill
if $v$ is a leaf of di-co-tree $T_G$  \\
\> then $\dpws(G[T_v])=0$  \\
\> else \{  \\
\> \> Directed Path-width($v_\ell$) \>\>\>\>\>\>\> $\blacktriangleright$ \textit{$v_{\ell}$ is the left successor of $v$} \\
\> \> Directed Path-width($v_r$)    \>\>\>\>\>\>\> $\blacktriangleright$ \textit{$v_{r}$ is the right successor of $v$} \\
\> \> if $v$ corresponds to a $\oplus$, or a  $\oslash$  operation \\
\> \> \> then $\dpws(G[T_v])=\max\{\dpws(G[T_{v_\ell}]),\dpws(G[T_{v_r}])\}$   \\
\> \> \> else  $\dpws(G[T_v])=\min\{\dpws(G[T_{v_\ell}])+ |V_{G[T_{v_r}]}|,\dpws(G[T_{v_r}])+ |V_{G[T_{v_\ell}]}|\}$  \\
\> \}
\end{tabbing}
\vspace{-2mm}
\hrule
\caption{Computing the directed path-width of $G$ for every vertex of a di-co-tree $T_G$.}
\label{fig:algo}
\end{figure}

For general digraphs $\dpws(G)$ leads to a lower bound for $\pws(\un(G))$
and  $\dtws(G)$ leads to a lower bound for $\tws(\un(G))$, see \cite{Bar06,JRST01}.
For directed co-graphs we obtain a closer relation as follows.

\begin{corollary}\label{le-order2}
Let $G$ be a directed co-graph and $\overleftrightarrow{\omega}(G)$ be the size of
a largest bioriented clique of $G$. It then holds that
$$\overleftrightarrow{\omega}(G)=\dpws(G)-1=\dtws(G)-1\leq \pws(\un(G))-1=\tws(\un(G))-1=\omega(\un(G)).$$
All values are equal if and only if $G$ is a complete bioriented digraph.
\end{corollary}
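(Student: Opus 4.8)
The plan is to establish the displayed chain link by link from results that are already available, treating the outer (clique) relations separately from the inner (width) ones, and to read off the equality characterization at the end.

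First I would settle the two inner equalities. Since the underlying graph $\un(G)$ of a directed co-graph is itself a co-graph, the Bodlaender--M\"ohring result \cite{BM93} gives $\pws(\un(G))=\tws(\un(G))$, which accounts for the right-hand equality. Theorem \ref{main} gives $\dpws(G)=\dtws(G)$, which accounts for the left-hand equality, and the middle comparison $\dpws(G)\le\pws(\un(G))$ is exactly Lemma \ref{le-d-ud} (equivalently Lemma \ref{le-tw-d-ud} in the tree-width reading, since $\dtws(G)=\dpws(G)$). So the whole inner block reduces to three previously proven facts.

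Next I would handle the outer links that connect the width parameters to clique sizes. For the directed side, let $k=\overleftrightarrow{\omega}(G)$ and let $H$ be the induced bioriented clique on $k$ vertices; since $H=\overleftrightarrow{K_k}$ is a complete bioriented digraph with $\un(H)=K_k$, Lemma \ref{le-c-bi} gives $\dpws(H)=\pws(K_k)=k-1$, and then induced-subdigraph monotonicity (Lemma \ref{le-pw-subdigraph}) yields the lower bound $\overleftrightarrow{\omega}(G)-1\le\dpws(G)$. The undirected side is the standard fact that a largest clique forces $\omega(\un(G))-1\le\tws(\un(G))$, which can be obtained the same way via Lemma \ref{le-tw-c-bi} and Lemma \ref{le-tw-subdigraph2}. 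These two bounds are the natural meaning of the outermost terms of the chain.

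Finally, for the equality characterization I would invoke Corollary \ref{xx}: one has $\dpws(G)=\pws(\un(G))$ exactly when $G$ admits an expression without any order operation. The key observation tying this to complete biorientation is that $\oslash$ is the only operation that introduces a non-bidirectional arc, whereas $\oplus$ and $\otimes$ produce only absent or bidirectional arcs; hence an order-free expression exists if and only if $G$ is complete bioriented, in which case $\overleftrightarrow{\omega}(G)=\omega(\un(G))$ and, by Lemma \ref{le-c-bi} and Lemma \ref{le-tw-c-bi}, all terms collapse together. The main obstacle I anticipate is precisely the leftmost link and the equality case: establishing that the bioriented clique number governs the directed width from below requires combining the exact value $\dpws(\overleftrightarrow{K_k})=k-1$ with Lemma \ref{le-pw-subdigraph}, and turning the order-free characterization of Corollary \ref{xx} into the clean structural statement ``complete bioriented'' is the step that has to be argued with care, since it is what makes the two-sided collapse of the chain an \emph{iff}.
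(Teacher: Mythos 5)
Your handling of the inner block of the chain coincides exactly with the paper's proof: $\dpws(G)=\dtws(G)$ from Theorem \ref{main}, $\dpws(G)\le\pws(\un(G))$ from Lemma \ref{le-d-ud}, and $\pws(\un(G))=\tws(\un(G))$ from \cite{BM90,BM93}. For the clique--width links you take a different route: the paper invokes the bidirectional complete subdigraph lemmas (Lemma \ref{cl1}, resp.\ Lemma \ref{cl1tw}), which force a bioriented clique entirely into one bag, whereas you combine Lemma \ref{le-c-bi} (giving $\dpws(\overleftrightarrow{K_k})=k-1$) with induced-subdigraph monotonicity (Lemma \ref{le-pw-subdigraph}). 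Both routes are valid, but both deliver only the one-sided bounds $\overleftrightarrow{\omega}(G)-1\le\dpws(G)$ and $\omega(\un(G))-1\le\tws(\un(G))$, and you say so yourself when you call these bounds ``the natural meaning of the outermost terms.''

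That is precisely the gap, and it is not one you could have closed, because the asserted equalities are false. Let $G$ be the series composition of two edgeless digraphs on $n$ vertices each, i.e.\ $G=\overleftrightarrow{K_{n,n}}$, which is a directed co-graph and is complete bioriented. By part 4 of Theorem \ref{th-pw} and of Theorem \ref{th-johnson}, $\dpws(G)=\dtws(G)=n$, and likewise $\pws(\un(G))=\tws(\un(G))=n$, while $\overleftrightarrow{\omega}(G)=\omega(\un(G))=2$; for $n\ge 4$ no placement of the $-1$'s reconciles $2$ with $n$. The same example breaks the last step of your argument: since $G$ is complete bioriented, Corollary \ref{xx} does make the four width values coincide, and trivially $\overleftrightarrow{\omega}(G)=\omega(\un(G))$, but the clique numbers do not ``collapse together'' with the widths, so the ``if'' direction of the equality characterization fails (your reduction of ``order-free expression'' to ``complete bioriented'' is fine; the collapse onto the clique numbers is what breaks). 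You share this defect with the paper itself: its proof reads the lower bound of Lemma \ref{cl1} as an equality, attributes $\tws(\un(G))-1=\omega(\un(G))$ to \cite{BM90,BM93} (who prove only $\pws=\tws$ for co-graphs), and never addresses the ``if and only if'' at all. What is actually provable --- and what your argument correctly establishes --- is the chain $\overleftrightarrow{\omega}(G)-1\le\dpws(G)=\dtws(G)\le\pws(\un(G))=\tws(\un(G))$ together with $\omega(\un(G))-1\le\tws(\un(G))$.
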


\begin{proof}
The equality $\pws(\un(G))-1=\tws(\un(G))-1=\omega(\un(G))$
has been shown in \cite{BM90,BM93}. The equality
$\overleftrightarrow{\omega}(G)=\dpws(G)-1=\dtws(G)-1$ follows by 
Lemma \ref{cl1} (or Lemma \ref{cl1tw}) and Theorem \ref{main}. The upper bound
follows by Lemma \ref{le-d-ud} or Lemma \ref{le-tw-d-ud}. 
\end{proof}

\subsection{Extended Directed Co-graphs}

Theorem \ref{main} can be generalized to extended directed co-graphs.

\begin{theorem}\label{main-2}
For every extended directed co-graph $G$, it holds that $\dpws(G)=\dtws(G)$.
\end{theorem}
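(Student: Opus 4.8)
The plan is to prove this by induction on the number of vertices $|V|$, in direct analogy with the proof of Theorem \ref{main}. The key observation is that the two operations permitted in the definition of extended directed co-graphs, namely the directed union $\ominus$ and the series composition $\otimes$, are governed by \emph{identical} recurrences in Theorem \ref{th-pw} and Theorem \ref{th-johnson}: both parameters behave as a maximum under $\ominus$ and as the same minimum expression under $\otimes$. Hence the equality $\dpws=\dtws$ is preserved by both operations whenever it holds for the arguments.

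First I would pass to a binary ex-di-co-tree, as permitted by the remark following the definition of extended directed co-graphs (the analogue of Lemma \ref{le-di-co}), so that it suffices to handle binary compositions. For the base case $|V|=1$ both parameters are $0$. For the inductive step, write $G=G_1\ominus G_2$ or $G=G_1\otimes G_2$ with vertex sets $V_1,V_2$. In the first case, Theorem \ref{th-pw}, the induction hypothesis, and Theorem \ref{th-johnson} give
\[\dpws(G)=\max\{\dpws(G_1),\dpws(G_2)\}=\max\{\dtws(G_1),\dtws(G_2)\}=\dtws(G),\]
and in the second case the same three ingredients give
\[\dpws(G)=\min\{\dpws(G_1)+|V_2|,\dpws(G_2)+|V_1|\}=\min\{\dtws(G_1)+|V_2|,\dtws(G_2)+|V_1|\}=\dtws(G).\]

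The point to emphasize is that there is essentially no obstacle here: all the substantive work has already been done in Theorems \ref{th-pw} and \ref{th-johnson}, which supply matching closed-form recurrences for precisely the two operations out of which every extended directed co-graph is built. The induction merely chains these equalities together, so the argument is a near-verbatim repetition of the proof of Theorem \ref{main}, restricted to the operations $\ominus$ and $\otimes$. The only thing one must confirm, which is immediate from the recursive definition, is that no other operation occurs, so that Theorems \ref{th-pw} and \ref{th-johnson} always apply.
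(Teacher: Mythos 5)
Your proof is correct and follows exactly the route the paper intends: the paper gives no separate proof for this theorem, instead noting that Theorem \ref{main} generalizes, and that generalization is precisely your induction using the matching recurrences for $\ominus$ and $\otimes$ from Theorems \ref{th-pw} and \ref{th-johnson}. Nothing is missing.
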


The algorithm shown in Fig.\ \ref{fig:algo} can be adapted  to show the
following result.

\begin{theorem}\label{th-general}
For every extended directed co-graph $G=(V,E)$ which is given by 
a binary ex-di-co-tree  the directed path-width and directed tree-width can be
computed in time $\bigo(|V|)$.
\end{theorem}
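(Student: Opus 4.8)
The plan is to adapt the linear-time algorithm of Figure~\ref{fig:algo} to the recursive structure of extended directed co-graphs, replacing the di-co-tree by a binary ex-di-co-tree. By the remark following the definition of extended directed co-graphs, it suffices to handle binary ex-di-co-trees, whose inner nodes are labelled by either the series composition $\otimes$ or the directed union $\ominus$. Theorem~\ref{th-pw} and Theorem~\ref{th-johnson} give closed-form recurrences for both parameters under all four operations, and in particular for $\ominus$ and $\otimes$; together with Theorem~\ref{main-2} these recurrences coincide for $\dpws$ and $\dtws$, so a single recursion computes both values simultaneously.

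First I would describe the modified algorithm: traverse the binary ex-di-co-tree $T_G$ in post-order. At each leaf $v$, set $\dpws(G[T_v])=0$. At an inner node $v$ with left and right subtrees $T_{v_\ell}$ and $T_{v_r}$, apply the appropriate recurrence from Theorem~\ref{th-pw}: if $v$ is a $\ominus$ node, then $\dpws(G[T_v])=\max\{\dpws(G[T_{v_\ell}]),\dpws(G[T_{v_r}])\}$; if $v$ is a $\otimes$ node, then $\dpws(G[T_v])=\min\{\dpws(G[T_{v_\ell}])+|V_{G[T_{v_r}]}|,\dpws(G[T_{v_r}])+|V_{G[T_{v_\ell}]}|\}$. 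The correctness of each step is immediate from Theorem~\ref{th-pw} (equivalently Theorem~\ref{th-johnson} for $\dtws$), applied to the two vertex-disjoint induced subdigraphs $G[T_{v_\ell}]$ and $G[T_{v_r}]$ whose composition is $G[T_v]$.

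For the running time I would argue as in the proof of the analogous co-graph theorem. The cardinalities $|V_{G[T_v]}|$ of all subdigraphs defined by subtrees can be precomputed in one bottom-up pass in total time $\bigo(|V|)$, since the size at an inner node is the sum of the sizes at its two children. Each inner node of the binary ex-di-co-tree then requires only constant work (a single $\max$ or a single $\min$ of two sums), and a binary tree with $|V|$ leaves has $\bigo(|V|)$ inner nodes, so the whole computation runs in time $\bigo(|V|)$.

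The only genuine subtlety, rather than a true obstacle, is that an ex-di-co-tree uses the directed union $\ominus$ in place of the disjoint union $\oplus$ and order composition $\oslash$; but Theorem~\ref{th-pw} and Theorem~\ref{th-johnson} already supply the recurrence $\max\{\dpws(G),\dpws(H)\}$ for $\ominus$ itself, so no further case analysis of how $\ominus$ decomposes is needed, and the recurrences carry over verbatim. I expect this step to be where one must be careful to invoke the directed-union clause rather than only the disjoint-union and order-composition clauses, but once that clause is cited the argument is identical to the directed co-graph case.
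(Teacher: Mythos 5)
Your proposal is correct and follows essentially the same route as the paper: it adapts the algorithm of Fig.~\ref{fig:algo} by replacing the $\oplus$/$\oslash$ cases with the directed-union case of Theorem~\ref{th-pw} (and Theorem~\ref{th-johnson}), keeping the series case unchanged, and reuses the same $\bigo(|V|)$ precomputation of subtree sizes. The paper states this adaptation only in one sentence, so your write-up is simply a more explicit version of the intended argument.
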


In order to process the strong components of a digraph we recall the
following definition. The {\em acyclic condensation} of a digraph $G$, $AC(G)$ for short,
is the digraph whose vertices are the strongly connected
components $V_1,\ldots,V_c$ of $G$ and there is an edge from
$V_i$ to $V_j$ if there is an edge $(v_i,v_j)$ in $G$ such 
that $v_i\in V_i$ and $v_j\in V_j$. Obviously for every
digraph $G$ the digraph  $AC(G)$ is always acyclic.

\begin{lemma}\label{le-repr-sc}
Every digraph $G$ can be represented by the directed union of its strong components.
\end{lemma}

\begin{proof}
Let $G$ be a digraph,
$AC(G)$ be the acyclic condensation of $G$, and $v_1,\ldots,v_c$ be a 
topological ordering of $AC(G)$, i.e. for every edge $(v_i,v_j)$ in $AC(G)$
it holds $i<j$. Further let $V_1,\ldots,V_c$ be the vertex sets of its strong components
ordered by the topological ordering. Then
$G$ can be obtained by $G=G[V_1] \ominus \ldots  \ominus G[V_c]$.
\end{proof}

\begin{theorem}\label{th-strongc}
Let $G$ be a digraph, then it holds:
\begin{enumerate}
\item The directed tree-width of $G$ is the maximum tree-width of its strong components.
\item The directed path-width of $G$ is the maximum path-width of its strong components.
\end{enumerate}
\end{theorem}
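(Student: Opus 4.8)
The plan is to combine Lemma~\ref{le-repr-sc} with the directed-union formulas proved earlier. By Lemma~\ref{le-repr-sc}, any digraph $G$ decomposes as $G=G[V_1]\ominus\ldots\ominus G[V_c]$, where $V_1,\ldots,V_c$ are the vertex sets of the strong components of $G$ ordered by a topological ordering of the acyclic condensation $AC(G)$. The directed union of several digraphs is associative in the same way as the binary operations treated before, so I would first reduce to the binary case and then iterate: $G[V_1]\ominus(G[V_2]\ominus\ldots\ominus G[V_c])$.

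For the directed tree-width statement, I would apply the third item of Theorem~\ref{th-johnson}, namely $\dtws(A\ominus B)=\max\{\dtws(A),\dtws(B)\}$, inductively along the decomposition. This gives
\[
\dtws(G)=\max\{\dtws(G[V_1]),\ldots,\dtws(G[V_c])\},
\]
i.e.\ the directed tree-width of $G$ equals the maximum directed tree-width over its strong components. The path-width statement follows identically from the third item of Theorem~\ref{th-pw}, which establishes the same $\max$-formula for $\dpws(A\ominus B)$. The only subtlety to check is that the binary formula really does extend to the $c$-ary directed union: since $G[V_2]\ominus\ldots\ominus G[V_c]$ is itself a directed union and hence a valid argument to another directed-union operation, and since the formula $\max\{\dtws(A),\dtws(B)\}$ is manifestly associative, a straightforward induction on $c$ delivers the general formula. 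I would make this induction explicit, with the base case $c=1$ being trivial.

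A point worth addressing is a mild mismatch in terminology: the theorem statement says ``the maximum \emph{tree-width} of its strong components'' and ``the maximum \emph{path-width}'', whereas the formula produced above is phrased in terms of \emph{directed} tree-width and \emph{directed} path-width of the components. These coincide here because each strong component is being measured by the directed parameter in the induction; so I would either read ``tree-width'' as ``directed tree-width'' (the natural reading in this directed setting) or remark that no further identification is needed. I expect the main obstacle to be purely one of bookkeeping rather than mathematics: verifying that the ordered directed union in Lemma~\ref{le-repr-sc} genuinely has no arcs going ``backward'' between components (guaranteed by the topological ordering and the fact that strong components have no arcs returning once left), so that the hypotheses of the directed-union case of Theorems~\ref{th-pw} and \ref{th-johnson} are met at every step of the induction.
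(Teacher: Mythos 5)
Your proposal is correct and follows exactly the paper's own route: the paper proves this theorem in one line by combining Lemma~\ref{le-repr-sc} with the directed-union items of Theorems~\ref{th-pw} and \ref{th-johnson}, which is precisely your argument (with the iteration over the $c$-ary directed union and the reading of ``tree-width'' as ``directed tree-width'' left implicit there but made explicit by you).
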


\begin{proof}
Follows by Lemma \ref{le-repr-sc} and Theorem \ref{th-pw} and Theorem \ref{th-johnson}.
\end{proof}

The directed path-width result of Theorem \ref{th-strongc}
was also shown in \cite{YC08} using the directed vertex separation number,
which is equal to the directed path-width.

\section{Conclusion and Outlook}\label{sec-concl}

In this paper we could generalize the equivalence
of path-width and tree-width of co-graphs which is
known from \cite{BM90,BM93} to directed graphs.
The shown equality also holds for more general directed tree-width 
definitions such as allowing empty sets $W_r$ in \cite{JRST01a}. 

This is not possible for the directed tree-width approach suggested by Reed in  \cite{Ree99},
which uses  sets $W_r$ of size one only for the leaves of $T$ of a directed 
tree-decomposition $(T, \mathcal{X}, \mathcal{W})$. 
To obtain a counter-example
let $S_{1,n}=(V,E)$ be a star graph on $1+n$ vertices, i.e. $V=\{v_0,v_1,\ldots,v_n\}$
and $E=\{\{v_0,v_i\}~|~1\leq i \leq n\}$. Further let $G_n$ be the complete biorientation
of $S_{1,n}$, which is a directed co-graph. 
Then $\tws(S_{1,n})=1$ and by Theorem \ref{main} and Theorem \ref{le-tw-d-ud}
we know $\dpws(G_n)=\dtws(G_n)\leq 1$. 
Using the approach of  \cite{Ree99} in any possible tree-decomposition $(T, \mathcal{X}, \mathcal{W})$ 
for $G_n$ there is a leaf $u$ of $T$ such that $W_u=\{v_0\}$. Further there is some $u'\in V_T$,
such that  $(u',u)\in E_T$. By normality for edge $(u',u)$ it holds $X_{(u',u)}=\{v_1,\ldots,v_n\}$
which implies that using the approach of \cite{Ree99} the directed tree-width of $G$ is at least $n$.

The approach given in
\cite[Chapter 6]{DE14} using strong components within
(dtw-2) should be considered in future work.
Further research directions should extend
the shown results to larger classes as well as
consider related width parameters.

The class of directed co-graphs was studied very well in \cite{CP06}.  For the
class of extended directed co-graphs it remains to show how to compute an
ex-di-co-tree in order to apply Theorem \ref{th-general}. 

\section*{Acknowledgements}

The work of the second author was supported by the German Research 
Association (DFG) grant  GU 970/7-1.



\end{document}